\documentclass[12pt]{article}

\usepackage{amsmath,amssymb,amsthm,amsfonts,mathrsfs}
\usepackage{color}
\usepackage{hyperref}
\usepackage{fullpage}
\usepackage{subfigure}
\usepackage{graphicx}
\usepackage{graphics}
\usepackage{subfigure}
\usepackage{multirow}
\usepackage{booktabs}

\newcommand{\va}[0]{\mathbf a}
\newcommand{\vb}[0]{\mathbf b}

\newcommand{\ve}[0]{\mathbf e}

\newcommand{\vv}[0]{\mathbf v}
\newcommand{\vw}[0]{\mathbf w}
\newcommand{\vx}[0]{\mathbf x}

\newcommand{\vz}[0]{\mathbf z}

\newcommand{\vzero}[0]{\mathbf 0}
\newcommand{\vone}[0]{\mathbf 1}
\newcommand{\mbf}[1]{\mbox{\boldmath$#1$}}
\newcommand{\Cov}[0]{\text{Cov}}
\newcommand{\Var}[0]{\text{Var}}

\newcommand{\diag}[0]{\text{diag}}
\newcommand{\tr}[0]{\text{tr}}

\newcommand{\E}[0]{\mathbb{E}}
\newcommand{\calX}[0]{\mathcal{X}}

\newcommand{\Prob}[0]{\mathbb{P}}
\newcommand{\I}[0]{\mathbb{I}}

\newcommand{\FP}[0]{\text{FP}}

\oddsidemargin -0pt \evensidemargin -0pt \topmargin -20pt
\textheight 600pt \textwidth 460pt

\newtheorem{thm}{Theorem}[section]
\newtheorem{lem}[thm]{Lemma}
\newtheorem{prop}[thm]{Proposition}

\theoremstyle{definition}
\newtheorem{defn}{Definition}[section]

\newtheorem{exmp}{Example}[section]

\theoremstyle{remark}
\newtheorem{rmk}{Remark}

\begin{document}

\begin{center}
{\Large \bf Inference of high-dimensional linear models with time-varying coefficients}

\vspace{0.1in}

Xiaohui Chen, Yifeng He \\
{\it University of Illinois at Urbana-Champaign} \\ 
Email: xhchen@illinois.edu, he3@illinois.edu \\

\vspace{0.2in}
\today \\
\end{center}

\begin{abstract}
We propose a pointwise inference algorithm for high-dimensional linear models with time-varying coefficients. The method is based on a novel combination of the nonparametric kernel smoothing technique and a Lasso bias-corrected ridge regression estimator. Due to the non-stationarity feature of the model, dynamic bias-variance decomposition of the estimator is obtained. With a bias-correction procedure, the local null distribution of the estimator of the time-varying coefficient vector is characterized for iid Gaussian and heavy-tailed errors. The limiting null distribution is also established for Gaussian process errors, and we show that the asymptotic properties differ between short-range and long-range dependent errors. Here, p-values are adjusted by a Bonferroni-type correction procedure to control the familywise error rate (FWER) in the asymptotic sense at each time point. The finite sample size performance of the proposed inference algorithm is illustrated with synthetic data and an application to learn brain connectivity by using the resting-state fMRI data for Parkinson's disease.
\end{abstract}

\section{Introduction}

We consider the time-varying coefficient models (TVCM)
\begin{equation}
\label{eqn:tvcm}
y(t) = \vx(t)^\top \mbf{\beta}(t) + e(t)
\end{equation}
where $t \in [0,1]$ is the time index, $y(\cdot)$ the response process, $\vx(\cdot)$ the $p \times 1$ deterministic predictor process, $\mbf{\beta}(\cdot)$ the $p \times 1$ time varying coefficient vector, and $e(\cdot)$ the mean zero stationary error process. The response and predictors are observed at $t_i = i/n, i=1,...,n$, i.e. $y_i = y(t_i), \vx_i = \vx(t_i)$, and $e_i = e(t_i)$ with a known covariance matrix $\Sigma_e=\Cov(\ve)$ where $\ve=(e_1,\cdots,e_n)^\top$. TVCM is useful for capturing the dynamic associations in the regression models and longitudinal data analysis \cite{hooveretal1998}, and it has broad applications in biomedical engineering, environmental science, and econometrics. In this paper, we focus on the {\it pointwise} inference for the time-varying coefficient vector $\mbf\beta(t)$ in the high-dimensional double asymptotic framework $\min(p,n) \to \infty$. 

Nonparametric estimation and inference of the TVCM in fixed dimension has been extensively studied, see e.g. \cite{robinson1989,hooveretal1998,clevelandgrosseshyu1991,fanzhang1999,zhangleesong2002,orbeferreirarodriguez-poo2005,cai_zw2007,zhangwu2012a,zhouwu2010}. In high dimensions, variable selection and estimation of varying-coefficient models using basis expansions have been studied in \cite{weihuangli2011,xuequ2012,ruiyizou2014}. Our primary objective is not to estimate $\mbf{\beta}(t)$, but rather to perform statistical inference on the coefficients. In particular, for any $t \in (0,1)$, we wish to test the local hypothesis, for $ j = 1, \cdots, p$,
\begin{equation}
H_{0,j,t}: \beta_j(t) = 0 \quad \text{VS} \quad H_{1,j,t} : \beta_j(t) \neq 0.
\end{equation}
By assigning p-values at each time point, we construct a sequence of estimators of the coefficient vectors that allows us to assess the uncertainty of the dynamic patterns in such as brain connectivity networks. Confidence intervals and hypothesis testing problems of lower-dimensional functionals of the high-dimensional constant coefficient vector $\mbf\beta(t) \equiv \mbf\beta, \forall t \in [0,1]$, have been studied in \cite{buhlmann2013,zhangzhang2013,javanmardmontanari2014a}. To the best of our knowledge, little has been done for inference of the high-dimensional TVCM and our goal is to fill the inference gap between the classical TVCM and the high-dimensional linear model.

While the existing literature on high-dimensional linear models is based on iid errors, \cite{buhlmann2013,zhangzhang2013,javanmardmontanari2014a}, we provide an asymptotic theory for answering the question that to which extent the statistical validity of inferences based on iid errors can hold for dependent errors. Allowing temporal dependence is of the practical interest as many datasets such as fMRI data are spatio-temporal and the errors are naturally correlated in the time domain. Theoretical analysis has revealed that the temporal dependence has delicate impact on the asymptotic rates for estimating the covariance structures, \cite{chenxuwu2013,chenxuwu2016}. Therefore, it is useful to build an inference procedure that is also robust in the time series context. The error process $e_i$ is modelled as a stationary linear process
\begin{equation}
\label{eqn:linearprocess}
e_i = \sum_{m=0}^\infty a_m \xi_{i-m},
\end{equation}
where $a_0=1$ and $\xi_i$ are iid mean-zero random variables (a.k.a. innovations) with variance $\sigma^2$. When the $\xi_i$ are normal, the linear processes of form (\ref{eqn:linearprocess}) are Gaussian processes that cover the autoregressive and moving-average (ARMA) models with iid Gaussian innovations as special cases. For the linear process, we deal with both weak and strong temporal dependences. In particular, if $a_m = O((m+1)^{-\varrho})$ and $\varrho > 1/2$, then $e_i$ is well-defined and has (i) short-range dependence (SRD) if $\varrho > 1$, (ii) long-range dependence (LRD) if $1/2 < \varrho < 1$. For the SRD processes, it is clear that $\sum_{m=0}^\infty |a_m| < \infty$ and therefore the long-run variance is finite.

The paper is organized as follows. In Section \ref{sec:mehod}, we describe our method in details. Asymptotic theory is presented in Section \ref{sec:asymptotics}. Section \ref{sec:experiments} presents some simulation results and Section \ref{sec:real_data} demonstrates an application to an fMRI dataset. The paper concludes in Section \ref{sec:discussion} with a discussion of some future work. Proofs and some implementation issues are available in the Appendix.

\section{Method}
\label{sec:mehod}

\subsection{Notations and Preliminary}

Let $K$ be a non-negative symmetric function with bounded support in $[-1,1], \int_{-1}^1 K(x)dx = 1$, and let $b_n$ be a bandwidth parameter satisfying $b_n = o(1)$ and $n^{-1} = o(b_n)$. For each time point $t \in \varpi = [b_n, 1-b_n]$, the Nadaraya-Waston smoothing weight is defined as
\begin{equation}
\label{eqn:nw-weights}
w(i,t) = \begin{cases}\frac{K_{b_n}( t_i-t )}{\sum_{m=1}^n K_{b_n}( t_m-t )} & \mbox{if } \vert t_i-t \vert \leq b_n \\ \makebox[70pt][c]{0} & \mbox{otherwise}\end{cases},
\end{equation}
where $K_b(\cdot) = K(\cdot/b)$. Let $N_t = \{i: |t_i-t| \leq b_n\}$ be the $b_n$-neighborhood of time $t$, $|N_t|$ be the cardinality of the discrete set $N_t$, $W_t = \text{diag}(w(i,t)_{i \in N_t})$ be the $|N_t| \times |N_t|$ diagonal matrix with $w(i,t), i \in N_t$ on the diagonal, and let $\mathcal{R}_t = \text{span}(\vx_i : i \in N_t)$ be the subspace in $\mathbb{R}^p$ spanned by $\vx_i$, the rows of design matrix $X$ in the $N_t$ neighborhood. Let $\mathcal{X}_t = (w(i,t)^{1/2} \vx_i)_{i \in N_t}^\top$, ${\cal Y}_t = (w(i,t)^{1/2} y_i)_{i \in N_t}^\top$, and ${\cal E}_t = (w(i,t)^{1/2} e_i)_{i \in N_t}^\top$. Denote $I_p$ as the $p \times p$ identity matrix. We write the singular value decomposition (SVD) of  $\mathcal{X}_t$ as
\begin{equation}
\label{eqn:svd_X}
\mathcal{X}_t = P D Q^\top
\end{equation}
where $P$ and $Q$ are $|N_t| \times r$, and $p \times r$ matrices such that $P^\top P = Q^\top Q = I_r$, and $D = \diag(d_1, \cdots, d_r)$ is a diagonal matrix containing the $r$ nonzero singular values of ${\cal X}_t$. Now let $P_{\mathcal{R}_t}$ be the projection matrix onto $\mathcal{R}_t$,
\begin{equation}
P_{\mathcal{R}_t} = \mathcal{X}_t^\top (\mathcal{X}_t \mathcal{X}_t^\top)^- \mathcal{X}_t = Q Q^\top,
\end{equation}
where $(\mathcal{X}_t \mathcal{X}_t^\top)^- = P D^{-2} P^\top$ is the pseudo-inverse matrix of $\mathcal{X}_t \mathcal{X}_t^\top$. Let $\mbf{\theta}(t) = P_{\mathcal{R}_t}\mbf{\beta}(t)$ be the projection of $\mbf{\beta}(t)$ onto $\mathcal{R}_t$ such that $B(t) = \mbf{\theta}(t) - \mbf{\beta}(t)$ is the projection bias. Let
\begin{equation}
\label{eqn:Omega}
\Omega(\lambda) = ({\cal X}_t^\top {\cal X}_t + \lambda I_p)^{-1} {\cal X}_t^\top W_t^{1/2} \Sigma_{e,t} W_t^{1/2} {\cal X}_t ({\cal X}_t^\top {\cal X}_t + \lambda I_p)^{-1}
\end{equation}
be the covariance matrix of the time-varying ridge estimator defined in (\ref{eqn:tv-ridge}), where $\Sigma_{e,t} = \Cov((e_i)_{i \in N_t})$ and $\lambda > 0$ is the shrinkage parameter of the ridge estimator. Let $\Omega_{\min}(\lambda) = \min_{j \le p} \Omega_{jj}(\lambda)$ be the smallest diagonal entry of $\Omega(\lambda)$. For a generic vector $\vb \in \mathbb{R}^p$, we write $|\vb|_q = (\sum_{j=1}^p |b_j|^q)^{1/q}$ if $q > 0$, and $|\vb|_0=\sum_{j=1}^p \vone(b_j\neq0)$ if $q=0$. Let $\underline{w}_t = \inf_{i \in N_t} w(i,t)$ and $\overline{w}_t = \sup_{i \in N_t} w(i,t)$. For an $n \times n$ square symmetric matrix $M$ and an $n \times m$ rectangle matrix $R$, we use $\rho_i(M)$ and $\sigma_i(R)$ to denote the $i$-th largest eigenvalues of $M$ and singular values of $R$, respectively. If $k = \text{rank}(R)$, then $\sigma_1(R) \ge \sigma_2(R) \ge \cdots \ge \sigma_k(R) > 0 = \sigma_{k+1}(R) = \cdots = \sigma_{\max(m, n)}(R)$, zeros being padded to the last $\max(m, n)-k$ singular values. We take $\rho_{\max}(M)$, $\rho_{\min}(M)$ and $\rho_{\min \neq 0}(M)$ as the maximum, minimum and nonzero minimum eigenvalues of $M$, respectively, and $|M|_{\infty} = \max_{1 \le j, k \le p} |M_{jk}|$. Let 
$$
\rho_{\max}(M,s)=\max_{|\vb|_0 \le s, \vb \neq \vzero} {\vb^\top M \vb \over \vb^\top \vb}.
$$ If $M$ is nonnegative definite, then $\rho_{\max}(M,s)$ is the restricted maximum eigenvalues of $M$ at most $s$ columns and rows.

The $p$-dimensional coefficient vector $\mbf\beta(t)$ is decomposed into two parts via projecting onto the $|N_t|$-dimensional linear subspace spanned by the rows of ${\cal X}_t$ and its orthogonal complement; see Figure \ref{fig:bias-correction}. A key advantage of this decomposition is that the projected part can be conveniently estimated in closed-form, for example, by the ridge estimator since it lies in the row space of ${\cal X}_t$ and thus is amenable for the subsequent inferential analysis. In the high-dimensional situation, this projection introduces a non-negligible shrinkage bias in estimating $\mbf\beta(t)$ and therefore we may lose information because $p \gg |N_t|$. On the other hand, the shrinkage bias can be corrected by a consistent estimator of $\mbf\beta(t)$. As a particular example, we use the Lasso estimator, though any sparsity-promoting estimator attaining the same convergence rate as the Lasso should work. Because of the time-varying nature of the nonzero functional $\mbf\beta(t)$, the smoothness on the row space of ${\cal X}_t$ along the time index $t$ is necessary to apply nonparametric smoothing technique; see Fig. \ref{fig:nonstationary-rowspace}. As a special case, when the nonzero components $\mbf\beta(t)\equiv \mbf\beta$ are constant functions and the error process is iid Gaussian, our algorithm is the same as that of \cite{buhlmann2013}. Here, we emphasize that (i) coefficient vectors are time-varying (i.e. non-constant), (ii) errors are allowed to have heavy-tails by assuming milder polynomial moment conditions and to have temporal dependence, including both SRD and LRD processes. There are other inferential methods for high-dimensional linear models such as \cite{zhangzhang2013,javanmardmontanari2014a}. We do not explore specific choices here since our contribution is a general framework of combining nonparametric smoothing and bias-correction methods to make inference for high-dimensional TVCM. However, we expect that a non-stationary generalization would be feasible for those methods as well. Some simulation comparisons are provided for time-varying versions of the bias-correction methods in Section \ref{sec:experiments}.

\begin{figure}[t!]
  \begin{center}
    \subfigure[Bias correction by projection to the row space of ${\cal X}_t$.]{\label{fig:bias-correction}\includegraphics[scale=0.45]{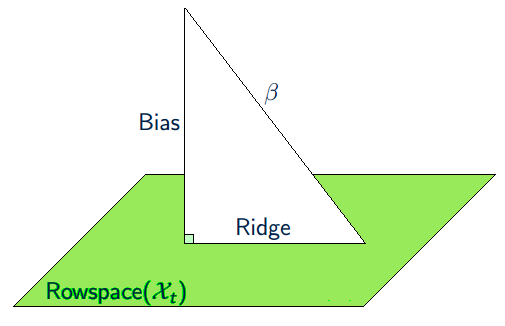}}
    \subfigure[Smoothly time-varying row space of ${\cal X}_t$.]{\label{fig:nonstationary-rowspace}\includegraphics[scale=0.5]{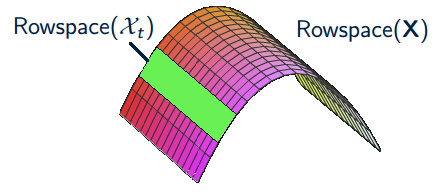}} \\
  \end{center}
  \caption{Intuition of the proposed algorithm in Section \ref{subsec:infer_alg}.}
  \label{fig:graphical_intuition}
\end{figure}

\subsection{Inference algorithm}
\label{subsec:infer_alg}

First, we estimate the projection bias $B(t)$ by $\tilde{B}(t) = (P_{\mathcal{R}_t}- I_p) \tilde{\mbf{\beta}}(t)$, where $ \tilde{\mbf{\beta}}(t) $ is the time-varying Lasso (tv-Lasso) estimator
\begin{eqnarray}
\label{eqn:tv-lasso}
\tilde{\mbf{\beta}}(t) &=& \arg\min_{\vb \in \mathbb{R}^p} \sum_{i \in N_t} w(i,t)(y_i - \vx_i^\top \vb)^2 + \lambda_1 |\vb|_1 \\
\nonumber
&=& \arg\min_{\vb \in \mathbb{R}^p} |{\cal Y}_t-{\cal X}_t \vb|_2^2 + \lambda_1 |\vb|_1.
\end{eqnarray}
Next, we estimate $\mbf{\theta}(t) = P_{\mathcal{R}_t}\mbf{\beta}(t)$ using the time-varying ridge (tv-ridge) estimator
\begin{align}
\nonumber
\tilde{\mbf\theta}(t) &= \arg\min_{\vb\in \mathbb{R}^p} \sum_{i \in N_t} w(i,t)(y_i - \vx_i^\top \vb)^2 + \lambda_2 |\vb|_2^2 \\
\label{eqn:tv-ridge}
&= (\mathcal{X}_t^\top \mathcal{X}_t + \lambda_2 I_p)^{-1} \mathcal{X}_t^\top {\cal Y}_t.
\end{align}
We defer the discussion of tuning parameters choice $\lambda_1$ and $\lambda_2$ to Section \ref{sec:asymptotics}. Our tv-Lasso bias-corrected tv-ridge regression estimator for $\mbf\beta(t)$ is
\begin{equation}
\label{eqn:proposed-estimator}
\hat{\mbf\beta}(t) = \tilde{\mbf\theta}(t) - \tilde{B}(t).
\end{equation}
Based on $\hat{\mbf\beta}(t) = (\hat\beta_1(t), \cdots, \hat\beta_p(t))^\top$, we calculate the raw two-sided p-values for individual coefficients
\begin{equation}
\label{eqn:raw-p-values}
\tilde{P}_j = 2 \left[ 1-\Phi \left( {|\hat\beta_j(t)| - \lambda_1^{1-\xi} \max_{k \neq j} |(P_{{\cal R}_t})_{jk}| \over \Omega_{jj}^{1/2}(\lambda_2) } \right) \right], \qquad j = 1, \cdots, p,
\end{equation}
where $\xi \in [0,1)$ is user pre-specified number that depends on the number of nonzero $\mbf\beta(t)$. In particular, if $|\text{supp}(\mbf\beta(t))|$ is bounded, then we can choose $\xi = 0$. Generally, following \cite{buhlmann2013}, we use $\xi = 0.05$ in our numeric examples to allow the number of nonzero components in $\mbf\beta(t)$ to diverge at proper rates. Let $\vv(t) = (V_1(t),\cdots,V_p(t))^\top \sim N(\vzero, \Omega(\lambda_2))$ and define the distribution function
\begin{equation}
\label{eqn:bonferroni_correction}
F(z) = \Prob\left( \min_{j \le p}  2 \left[ 1-\Phi \left( \Omega_{jj}^{-1/2} (\lambda_2) |V_j(t)| \right) \right] \le z \right).
\end{equation}
We adjust the $\tilde{P}_j$ for multiplicity by $P_j = F(\tilde{P}_j + \zeta)$, where $\zeta$ is another pre-defined small number \cite{buhlmann2013} that accommodates asymptotic approximation errors. Our decision rule is defined as: reject $H_{0,j,t}$ if $P_j \le \alpha$ for $\alpha \in (0,1)$. For iid errors, since $\Sigma_e = \sigma^2 I_n$ and
$$
\Omega(\lambda_2) = \sigma^2 ({\cal X}_t^\top {\cal X}_t + \lambda_2 I_p)^{-1} {\cal X}_t^\top W_t {\cal X}_t ({\cal X}_t^\top {\cal X}_t + \lambda_2 I_p)^{-1},
$$
we see that $F(\cdot)$ is independent of $\sigma$. Therefore, $F(\cdot)$ can be easily estimated by repeatedly sampling from the multivariate Gaussian distribution $N(\vzero, \Omega(\lambda_2))$. A similar observation has been made in \cite{buhlmann2013}.

\section{Asymptotic results}
\label{sec:asymptotics}

In this section, we present the asymptotic theory of the inference algorithm in Section \ref{subsec:infer_alg}. First, we state the main assumptions for iid Gaussian errors.
\begin{enumerate}
\item {\bf Error.} The errors $e_i \sim N(0, \sigma^2)$ are independent and identically distributed (iid).

\item {\bf Sparsity.} $\mbf\beta(\cdot)$ is uniformly $s$-sparse, i.e. $\sup_{t \in [0,1]} |S^*_t| \le s$, where $S^*_t = \{j : \beta_j(t) \neq 0\}$ is the support set.

\item {\bf Smoothness.}
\begin{enumerate}
\item $\mbf\beta(\cdot)$ is twice differentiable with bounded and continuous first and second derivatives in the coordinatewise sense, i.e. $\beta_j(\cdot) \in {\cal C}^2([0,1], C_0)$ for each $j = 1, \cdots, p$ and $C_0$ is an upper bound for the partial derivatives.
\item The $b_n$-neighborhood covariance matrix $\hat\Sigma_t^\diamond = |N_t|^{-1} \sum_{i \in N_t} \vx_i \vx_i^\top := {{\cal X}_t^\diamond}^\top {\cal X}_t^\diamond$ satisfies
\begin{equation}
\label{eqn:max_eigenvalue_Gram}
\rho_{\max}(\hat\Sigma_t^\diamond,s) \le \varepsilon_0^{-2} < \infty.
\end{equation}
\end{enumerate}

\item {\bf Non-degeneracy.}
\begin{eqnarray}
\label{eqn:min_diag_invGram}
\liminf_{\lambda \downarrow 0} \Omega_{\min}(\lambda) > 0.
\end{eqnarray}

\item {\bf Identifiability.} 
\begin{enumerate}
\item The minimum nonzero eigenvalue condition
\begin{equation}
\label{eqn:restricted_min_eigenvalue_Gram}
\rho_{\min \neq 0}(\hat\Sigma_t^\diamond) \ge \varepsilon_0^2 > 0.
\end{equation}

\item The {\it restricted eigenvalue condition}
\begin{equation}
\label{eqn:comptability_condtion}
\phi_0 = \inf \left\{ \phi > 0 : \min_{|S| = s} \;\; \inf_{|\vb_{S^c}|_1 \le 3 |\vb_S|_1} {\vb^\top \hat\Sigma_t \vb \over |\vb_{S}|_1^2} \ge {\phi^2 \over s } \;\;   \text{   holds for all   }  t \in [0,1] \right\} > 0,
\end{equation}
where $\hat\Sigma_t = {\cal X}_t^\top {\cal X}_t$ is the kernel smoothed covariance matrix of the predictors.
\end{enumerate}

\item {\bf Kernel.} The kernel function $K(\cdot)$ is nonnegative, symmetric around 0 with bounded support in $[-1,1]$.
\end{enumerate}
Here, we comment the assumptions and their implications. Assumption 1 and 6 are standard. The Gaussian distribution is non-essential and can be relaxed to sub-Gaussian and heavier tailed distributions (Theorem \ref{thm:non-gaussian}). Assumption 2 is a sparsity condition for the nonzero functional components and allows that $s \to \infty$ slower than $\min(p,n)$. It is a key condition for maintaining the low-dimensional structure when the dimension $p$ grows with the sample size $n$. By the argument of Theorem 5 in \cite{zhoulaffertywasserman2010}, it implies that the number of the first and second non-vanishing derivatives of $\mbf\beta(t)$ is bounded by $s$ almost surely on $[0,1]$. Assumption 3 ensures the smoothness of the time-varying coefficient vectors and the design matrix so that nonparametric smoothing techniques are applicable. Examples of Assumption 3(a) include the quadratic functions $\mbf\beta(t) = \mbf\beta + \mbf\alpha t + \mbf\xi t^2 / 2$ and the periodic functions $\mbf\beta(t) = \mbf\beta + \mbf\alpha \sin(t) + \mbf\xi \cos(t)$ with $|\mbf\alpha|_\infty + |\mbf\xi|_\infty \le C_0$. Assamption 3(b) can be viewed as Lipschitz continuity on the local design matrix that is smoothly evolving, \cite{zhouwu2010}. It is weaker than the condition that $\rho_{\max}(\hat\Sigma_t^\diamond) \le \varepsilon_0^{-2}$ because the latter may grow to infinity much faster than the restricted form (\ref{eqn:max_eigenvalue_Gram}). Assumption 4 is required for a non-degenerated stochastic component of the proposed estimator which is used for the inference purpose. Assumption 5(a) and 5(b) together impose the identifiability conditions for recovering the coefficient vectors. The analogous condition of the time-invariant version has been extensively used in literature to derive theoretical properties of the Lasso model; see e.g. \cite{bickelritovtsybakov2009,vandegeerbuhlmann2009}.

For the tv-lasso bias-corrected tv-ridge estimator (\ref{eqn:proposed-estimator}), we have the following representation theorem.

\begin{thm}[Representation]
\label{thm:representation}
Fix $t \in \varpi$ and let
\begin{equation}
\label{eqn:lambda_0+L_t}
L_{t,\ell} = \max_{j \le p} \left[ \sum_{i \in N_t} w(i,t)^\ell X_{ij}^2 \right]^{1/2}, \; \ell=1,2,\cdots, \qquad \lambda_0 = 4 \sigma L_{t,2} \sqrt{\log{p}}, 
\end{equation}
and $\lambda_1 \ge 2(\lambda_0 + 2 C_0 L_{t,1} b_n (s|N_t| \overline{w}_t)^{1/2} \varepsilon_0^{-1})$. If $\lambda_2=o(1)$, Assumptions 1-6 hold, and $C \le |N_t| \underline{w}_t \le |N_t| \overline{w}_t \le C^{-1}$ for some $C \in (0,1)$, then $\hat{\mbf\beta}(t)$ admits the decomposition
\begin{eqnarray}
\label{eqn:decomposition}
\hat{\mbf\beta}(t) &=& \mbf\beta(t) + \vz(t) + \mbf\gamma(t), \\
\label{eqn:decomposition_stochastic}
\vz(t) &\sim& N(\vzero, \Omega(\lambda_2)), \\
\label{eqn:decomposition_bias}
|\gamma_j(t)| &\le&  {\lambda_2 |\mbf\theta(t)|_2 + 2 C_0 s^{1/2} b_n \over C \varepsilon_0^2} + {4 \lambda_1 s \over \phi_0^2} |P_{{\cal R}_t} - I_p|_{\infty}, \quad j = 1,\cdots,p,
\end{eqnarray}
with probability tending to one. If $\beta_j(t) = 0$, then we have
\begin{equation}
\label{eqn:decomposition_marginal}
\Omega^{-1/2}_{jj}(\lambda_2) ( \hat\beta_j(t) - \gamma_j(t)) \sim N(0,1),
\end{equation}
where
\begin{equation}
\label{eqn:gamma_marginal_beta_j=0}
|\gamma_j(t)| \le {\lambda_2 |\mbf\theta(t)|_2 + 2 C_0 s^{1/2} b_n \over C \varepsilon_0^2} + {4 \lambda_1 s \over \phi_0^2} \max_{k \neq j} |(P_{{\cal R}_t})_{jk}|
\end{equation}
with probability tending to one.
\end{thm}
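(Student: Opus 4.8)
The plan is to substitute the model (\ref{eqn:tvcm}) into the closed forms (\ref{eqn:tv-ridge}) and (\ref{eqn:tv-lasso}), peel off a single Gaussian stochastic vector whose covariance is exactly $\Omega(\lambda_2)$, and collect everything else into three deterministic bias blocks. Writing $\mathcal{Y}_t = \mathcal{X}_t\mbf\beta(t) + \mathbf{r}_t + \mathcal{E}_t$ with the kernel-smoothing remainder $\mathbf{r}_t = (w(i,t)^{1/2}\vx_i^\top(\mbf\beta(t_i)-\mbf\beta(t)))_{i \in N_t}$, and setting $A = (\mathcal{X}_t^\top\mathcal{X}_t + \lambda_2 I_p)^{-1}$, I would expand $\tilde{\mbf\theta}(t) = A\mathcal{X}_t^\top\mathcal{X}_t\mbf\beta(t) + A\mathcal{X}_t^\top\mathbf{r}_t + A\mathcal{X}_t^\top\mathcal{E}_t$. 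Using the SVD (\ref{eqn:svd_X}), the first term equals $\mbf\theta(t) - \lambda_2 Q(D^2 + \lambda_2 I_r)^{-1}Q^\top\mbf\beta(t)$, the last term is the stochastic part $\vz(t) = A\mathcal{X}_t^\top\mathcal{E}_t$, and since $\mathcal{E}_t = W_t^{1/2}(e_i)_{i \in N_t}$ has covariance $W_t^{1/2}\Sigma_{e,t}W_t^{1/2}$, a direct computation gives $\Cov(\vz(t)) = \Omega(\lambda_2)$, which under Assumption 1 is exactly Gaussian, yielding (\ref{eqn:decomposition_stochastic}). Combining with the correction $\hat{\mbf\beta}(t) = \tilde{\mbf\theta}(t) + (I_p - P_{\mathcal{R}_t})\tilde{\mbf\beta}(t)$ and the identity $\mbf\theta(t) = \mbf\beta(t) + (P_{\mathcal{R}_t}-I_p)\mbf\beta(t)$ collapses the algebra to $\hat{\mbf\beta}(t) = \mbf\beta(t) + \vz(t) + \mbf\gamma(t)$, where $\mbf\gamma(t)$ is the sum of a ridge-shrinkage bias $-\lambda_2 Q(D^2+\lambda_2 I_r)^{-1}Q^\top\mbf\beta(t)$, a smoothing bias $A\mathcal{X}_t^\top\mathbf{r}_t$, and a Lasso-propagation term $(I_p - P_{\mathcal{R}_t})(\tilde{\mbf\beta}(t)-\mbf\beta(t))$.

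Next I would bound each block entrywise. The weight normalization $C \le |N_t|\underline{w}_t \le |N_t|\overline{w}_t \le C^{-1}$ together with Assumption 5(a) gives $\rho_{\min\neq 0}(\hat\Sigma_t) \ge C\varepsilon_0^2$, so the ridge block is controlled by $\lambda_2\,\rho_{\min\neq 0}(\hat\Sigma_t)^{-1}|\mbf\theta(t)|_2 \le \lambda_2|\mbf\theta(t)|_2/(C\varepsilon_0^2)$ using $Q^\top\mbf\beta(t)=Q^\top\mbf\theta(t)$. For the smoothing block, a first-order Taylor expansion under Assumption 3(a) gives $|\mbf\beta(t_i)-\mbf\beta(t)|_2 \le C_0 b_n s^{1/2}$ whenever $|t_i-t|\le b_n$; bounding $|\mathbf{r}_t|_2$ through the restricted maximal eigenvalue of Assumption 3(b) (which contributes a factor $\varepsilon_0^{-1}(|N_t|\overline{w}_t)^{1/2}$) and the operator norm of $A\mathcal{X}_t^\top$ on $\mathcal{R}_t$ through $\rho_{\min\neq 0}(\hat\Sigma_t)^{-1/2}$ then produces the second summand $2C_0 s^{1/2}b_n/(C\varepsilon_0^2)$ after routine bookkeeping of the $C$ factors.

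The delicate block is the Lasso propagation, which by H\"older is at most $|P_{\mathcal{R}_t}-I_p|_\infty\,|\tilde{\mbf\beta}(t)-\mbf\beta(t)|_1$, so the crux is an $\ell_1$-oracle inequality for the weighted tv-Lasso. I would argue from the basic inequality that everything hinges on the event $\{2|\mathcal{X}_t^\top(\mathbf{r}_t+\mathcal{E}_t)|_\infty \le \lambda_1\}$. Splitting the two pieces, the stochastic part $[\mathcal{X}_t^\top\mathcal{E}_t]_j = \sum_{i \in N_t} w(i,t)X_{ij}e_i$ is Gaussian with variance at most $\sigma^2 L_{t,2}^2$, so a union bound over $j \le p$ controls $|\mathcal{X}_t^\top\mathcal{E}_t|_\infty$ by $\lambda_0 = 4\sigma L_{t,2}\sqrt{\log p}$ with probability tending to one, while Cauchy--Schwarz together with Assumption 3 and sparsity bounds the deterministic part $|\mathcal{X}_t^\top\mathbf{r}_t|_\infty$ by $2C_0 L_{t,1}b_n(s|N_t|\overline{w}_t)^{1/2}\varepsilon_0^{-1}$; this is precisely why $\lambda_1 \ge 2(\lambda_0 + 2C_0 L_{t,1}b_n(s|N_t|\overline{w}_t)^{1/2}\varepsilon_0^{-1})$ is imposed. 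On this event the restricted eigenvalue condition (Assumption 5(b), constant $\phi_0$) drives the standard cone argument to give $|\tilde{\mbf\beta}(t)-\mbf\beta(t)|_1 \le 4\lambda_1 s/\phi_0^2$, hence the last summand $\frac{4\lambda_1 s}{\phi_0^2}|P_{\mathcal{R}_t}-I_p|_\infty$ of (\ref{eqn:decomposition_bias}). I expect establishing this oracle inequality to be the main obstacle: unlike the iid design of the standard Lasso analysis, the ``noise'' driving the KKT conditions now carries the deterministic smoothing remainder $\mathbf{r}_t$ that must be absorbed into $\lambda_1$, and the design is kernel-weighted and non-stationary.

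Finally, (\ref{eqn:decomposition_marginal}) follows by reading off the $j$-th coordinate: under $\beta_j(t)=0$ we have $\hat\beta_j(t)-\gamma_j(t)=z_j(t)\sim N(0,\Omega_{jj}(\lambda_2))$, and dividing by $\Omega_{jj}^{1/2}(\lambda_2)$, finite and positive by Assumption 4, standardizes it. For the sharper bound (\ref{eqn:gamma_marginal_beta_j=0}) I would isolate the $j$-th coordinate of the propagation term, writing $[(I_p-P_{\mathcal{R}_t})(\tilde{\mbf\beta}-\mbf\beta)]_j = (\tilde\beta_j-\beta_j) - \sum_k (P_{\mathcal{R}_t})_{jk}(\tilde\beta_k-\beta_k)$ and using $\beta_j(t)=0$; the off-diagonal sum is then at most $\max_{k\neq j}|(P_{\mathcal{R}_t})_{jk}|\,|\tilde{\mbf\beta}-\mbf\beta|_1$, and the diagonal contribution $(1-(P_{\mathcal{R}_t})_{jj})\tilde\beta_j$ must be folded in so that the row-wise quantity $\max_{k\neq j}|(P_{\mathcal{R}_t})_{jk}|$ replaces the global $|P_{\mathcal{R}_t}-I_p|_\infty$ --- the sharper constant that the test statistic in (\ref{eqn:raw-p-values}) is calibrated against. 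Handling this diagonal term under the null is the one place where the argument must go beyond the generic H\"older step used for (\ref{eqn:decomposition_bias}).
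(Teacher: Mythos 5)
Your proposal follows the paper's own argument essentially step for step: the same ridge/SVD computation isolating the shrinkage bias $-\lambda_2 Q(D^2+\lambda_2 I_r)^{-1}Q^\top\mbf\theta(t)$ and the exact Gaussian part with covariance $\Omega(\lambda_2)$, the same Taylor treatment of the non-stationarity (the paper merely splits your $\mathbf{r}_t$ into a first-order piece $M_t\mathcal{X}_t\mbf\beta'(t)$ and a second-order remainder), and the same weighted-Lasso oracle inequality $|\tilde{\mbf\beta}(t)-\mbf\beta(t)|_1\le 4\lambda_1 s/\phi_0^2$ obtained from the identical noise event, penalty calibration, and restricted-eigenvalue cone argument (the paper's Lemma \ref{lem:tv-lasso-rate}), combined by H\"older with $|P_{\mathcal{R}_t}-I_p|_\infty$. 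The diagonal-term subtlety you flag in the marginal null bound is genuine, but the paper passes over it just as quickly by simply invoking $B_j(t)=\sum_{k\neq j}(P_{\mathcal{R}_t})_{jk}\beta_k(t)$ under $H_{0,j,t}$, so your account matches the paper's proof in both substance and level of detail.
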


\begin{rmk}
The decomposition (\ref{eqn:decomposition}) can be viewed as a {\it local version} of the one proposed in \cite{buhlmann2013} (Proposition 2). However, due to the time-varying nature of the nonzero coefficient vectors, both the stochastic component $\vz(t)$ in (\ref{eqn:decomposition_stochastic}) and the bias component $\mbf\gamma(t)$ in (\ref{eqn:decomposition_bias}) differ from \cite{buhlmann2013}. First, our bound (\ref{eqn:decomposition_bias}) for bias has three terms arising from: ridge shrinkage, {\it non-stationarity} and Lasso correction, and each has localized features depending on the bandwidth $b_n$ of the sliding window and the smoothness parameter $C_0$. Second, the stochastic part (\ref{eqn:decomposition_stochastic}) also has time-dependent features in the covariance matrix (i.e. $\Omega(\lambda_2)$ implicitly depends on $t$ though ${\cal X}_t$) and the scale of normal random vector is different from \cite{buhlmann2013}. Delicate balance among them allows us to perform valid statistical inference such as hypothesis testing and confidence interval construction for the coefficients and, more broadly, their lower-dimensional linear functionals.
\end{rmk}

\begin{exmp}
\label{exmp:uniform_kernel}
Consider the uniform kernel $K(x) = 0.5 \I(|x| \le 1)$ as an important special case, the kernel used for our numeric experiments in Section \ref{sec:experiments}. In this case, $\underline{w}_t = (2 n b_n)^{-1}$ and $|N_t| \underline{w}_t = |N_t| \overline{w}_t =1$. It is easily verified that under the local null hypothesis $H_{0,j,t}$, (\ref{eqn:gamma_marginal_beta_j=0}) can be simplified to 
\begin{equation*}
\gamma_j(t) = O \left( \lambda_2 |\mbf\theta(t)|_2 + s^{1/2} b_n + \lambda_1 s \max_{k \neq j} |(P_{{\cal R}_t})_{jk}| \right).
\end{equation*}
From this, it is clear that the three terms correspond to bias of ridge-shrinkage, non-stationarity and Lasso-correction. The first and last components have dynamic features and the non-stationary bias is controlled by the bandwidth and sparsity parameters. The condition $C \le |N_t| \underline{w}_t \le |N_t| \overline{w}_t \le C^{-1}$ in Theorem \ref{thm:representation} rules out the case that the kernel does not use the boundary rows in the localized window and therefore avoids any jump in the time-dependent row subspaces.
\end{exmp}

\begin{rmk}
In Theorem \ref{thm:representation}, the penalty level for the tv-Lasso estimator $\lambda_1$ can be chosen as $O(\sigma L_{t,2} \sqrt{\log{p}} + L_{t,1} s^{1/2} b_n)$. The second term in the penalty is due to the non-stationarity of $\mbf\beta(t)$ and the factor $s^{1/2}$ arises from the weak coordinatewise smoothness requirement on its derivatives (Assumption 3(a)). In the Lasso case with $\mbf\beta(t)\equiv\mbf\beta$ and $w(i,t)\equiv n^{-1}$, an ideal order of the penalty level $\lambda_1$ is $\sigma n^{-1} \max_{j\le p} (\sum_{i=1}^n X_{ij}^2)^{1/2} (\log{p})^{1/2}$ see e.g. \cite{bickelritovtsybakov2009}. In the standardized design case $n^{-1} \sum_{i=1}^n X_{ij}^2 = 1$ so that $L_{t,1}=1$ and $L_{t,2}=n^{-1/2}$, the Lasso penalty is $O(\sigma (n^{-1} \log{p})^{1/2})$, while the tv-Lasso has an additional term $s^{1/2} b_n$ that may cause a larger bias. In our case, we estimate the time-varying coefficient vectors by smoothing the data points in the localized window. Thus, it is unnatural to standardize the reweighted local design matrix to have unit $\ell^2$ length and the additional bias $O(s^{1/2}b_n)$ is due to non-stationarity. If the $X_{ij}$ are iid Gaussian random variables without standardization and we interpret the linear model as conditional on $X$, then, under the uniform kernel, we have $L_{t,2}^2 = O_\Prob(\log{p}/|N_t|)$ and, in the Lasso case, the penalty level is $O(\sigma |N_t|^{-1/2} \log{p})$. If $s = O(\log{p})$ and $b_n = O((\log{p}/n)^{1/3})$, then the choice in Theorem \ref{thm:representation} has the same order as the Lasso with constant coefficient vector.
\end{rmk}

Based on Theorem \ref{thm:representation}, we can prove that the inference algorithm in Section \ref{subsec:infer_alg} asymptotically controls the familywise error rate (FWER). Let $\alpha \in (0,1)$ and $\FP_\alpha(t)$ be the number of false rejections of $H_{0,j,t}$ based on the adjusted p-values. In the asymptotic statement, $p := p(n)$ is a function of $n$ such that $p \to \infty$ as $n \to \infty$.

\begin{thm}[Pointwise inference: multiple testing]
\label{thm:multiple_testing}
If the conditions of Theorem \ref{thm:representation} hold and 
\begin{equation}
\label{eqn:var_kills_bias_ridge}
\lambda_2 |\mbf\theta(t)|_2 + s ^{1/2} b_n = o(\Omega_{\min}(\lambda_2)^{1/2}),
\end{equation}
then we have for each fixed $t \in \varpi$
\begin{equation}
\limsup_{n \to \infty} \Prob(\FP_\alpha(t) > 0) \le \alpha.
\end{equation}
\end{thm}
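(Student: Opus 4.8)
The plan is to reduce the familywise statement to a one-sided comparison between the realized raw p-values and an ``oracle'' p-value built from the pure stochastic part, and then to push everything through $F$ by a probability integral transform. Write $S_0 = \{j : \beta_j(t)=0\}$ for the set of true nulls, so that $\{\FP_\alpha(t)>0\} = \bigcup_{j\in S_0}\{P_j\le\alpha\}$ with $P_j = F(\tilde P_j + \zeta)$. Let $\mathcal{A}_n$ be the event on which the conclusion of Theorem~\ref{thm:representation} is valid, i.e. $\hat{\mbf\beta}(t)=\mbf\beta(t)+\vz(t)+\mbf\gamma(t)$ with the coordinatewise bias bound (\ref{eqn:gamma_marginal_beta_j=0}) holding; by Theorem~\ref{thm:representation}, $\Prob(\mathcal{A}_n)\to1$, so it suffices to bound $\Prob(\{\FP_\alpha(t)>0\}\cap\mathcal{A}_n)$ and add $o(1)$. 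The stochastic component $\vz(t)$, being a fixed linear functional of the Gaussian errors, is exactly $N(\vzero,\Omega(\lambda_2))$ and thus carries the very law used to define $F$ in (\ref{eqn:bonferroni_correction}). For $j\in S_0$ define the oracle p-value $\tilde P_j^0 = 2[1-\Phi(\Omega_{jj}^{-1/2}(\lambda_2)\,|z_j(t)|)]$, which discards both the bias $\gamma_j(t)$ and the Lasso threshold.

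The first key step is to show that, on $\mathcal{A}_n$ and for $n$ large, $\tilde P_j \ge \tilde P_j^0 - o(1)$ uniformly over $j\in S_0$. Since $\beta_j(t)=0$ gives $\hat\beta_j(t)=z_j(t)+\gamma_j(t)$, I would split (\ref{eqn:gamma_marginal_beta_j=0}) into its ridge/non-stationarity part $b_{\mathrm{rns}}=(\lambda_2|\mbf\theta(t)|_2+2C_0 s^{1/2}b_n)/(C\varepsilon_0^2)$ and its Lasso part $(4\lambda_1 s/\phi_0^2)\max_{k\neq j}|(P_{\mathcal{R}_t})_{jk}|$. The Lasso part is absorbed into the threshold subtracted inside (\ref{eqn:raw-p-values}): because $\lambda_1\to0$, choosing $\xi\in[0,1)$ at a proper rate relative to $s$ makes $4\lambda_1 s/\phi_0^2\le\lambda_1^{1-\xi}$ eventually, so that $|\hat\beta_j(t)|-\lambda_1^{1-\xi}\max_{k\neq j}|(P_{\mathcal{R}_t})_{jk}|\le|z_j(t)|+b_{\mathrm{rns}}$ and hence the argument of $\Phi$ in $\tilde P_j$ is at most $\Omega_{jj}^{-1/2}(\lambda_2)(|z_j(t)|+b_{\mathrm{rns}})$. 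Condition (\ref{eqn:var_kills_bias_ridge}) forces $\delta_n := \Omega_{\min}(\lambda_2)^{-1/2}b_{\mathrm{rns}}=o(1)$, and the global Lipschitz bound $\Phi'\le(2\pi)^{-1/2}$ gives $\tilde P_j\ge 2[1-\Phi(\Omega_{jj}^{-1/2}|z_j(t)|+\delta_n)]\ge\tilde P_j^0-\sqrt{2/\pi}\,\delta_n$. Taking $n$ large enough that $\sqrt{2/\pi}\,\delta_n\le\zeta$ and using monotonicity of $F$ yields $F(\tilde P_j+\zeta)\ge F(\tilde P_j^0)$, so that $\{P_j\le\alpha\}\cap\mathcal{A}_n\subseteq\{F(\tilde P_j^0)\le\alpha\}$ for every $j\in S_0$.

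It remains to bound $\Prob(\exists\,j\in S_0: F(\tilde P_j^0)\le\alpha)$. Let $M:=\min_{j\le p}2[1-\Phi(\Omega_{jj}^{-1/2}(\lambda_2)|V_j(t)|)]$ with $\vv(t)\sim N(\vzero,\Omega(\lambda_2))$, whose CDF is exactly $F$; since $\vz(t)\stackrel{d}{=}\vv(t)$, the variable $\min_{j\le p}\tilde P_j^0$ has the law of $M$. Because $S_0\subseteq\{1,\dots,p\}$, monotonicity of $F$ gives $\min_{j\in S_0}F(\tilde P_j^0)=F(\min_{j\in S_0}\tilde P_j^0)\ge F(\min_{j\le p}\tilde P_j^0)$, so $\{\exists\,j\in S_0: F(\tilde P_j^0)\le\alpha\}\subseteq\{F(\min_{j\le p}\tilde P_j^0)\le\alpha\}$. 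The probability integral transform then gives $\Prob(F(\min_{j\le p}\tilde P_j^0)\le\alpha)\le\alpha$, since $\min_{j\le p}\tilde P_j^0\stackrel{d}{=}M$ and $F$ is the CDF of $M$. Collecting the pieces,
\begin{equation*}
\Prob(\FP_\alpha(t)>0)\le\Prob\bigl(F(\min_{j\le p}\tilde P_j^0)\le\alpha\bigr)+\Prob(\mathcal{A}_n^c)\le\alpha+o(1),
\end{equation*}
and taking $\limsup_{n\to\infty}$ finishes the argument.

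The main obstacle is the uniform-in-$j$ bias handling in the second step: one must simultaneously (i) dominate the stochastic Lasso-correction bias by the deterministic threshold $\lambda_1^{1-\xi}\max_{k\neq j}|(P_{\mathcal{R}_t})_{jk}|$, which is exactly what pins down the admissible growth of $s$ through $\xi$, and (ii) render the ridge/non-stationarity bias asymptotically negligible against the per-coordinate standard deviation via (\ref{eqn:var_kills_bias_ridge}), with the deterministic buffer $\zeta$ absorbing the residual $o(1)$ slack so that the $F$-monotonicity comparison is clean. The remainder — the probability integral transform and the conservativeness obtained from replacing $\min_{j\in S_0}$ by $\min_{j\le p}$ — is standard once the comparison $\tilde P_j\ge\tilde P_j^0-o(1)$ is established.
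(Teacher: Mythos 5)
Your proof is correct and follows essentially the route the paper intends: it omits the proof of Theorem~\ref{thm:multiple_testing}, stating that it combines the argument of Theorem~2 of \cite{buhlmann2013} with Theorem~\ref{thm:representation}, and that is exactly what you do --- absorb the Lasso-correction bias into the subtracted threshold $\lambda_1^{1-\xi}\max_{k\neq j}|(P_{{\cal R}_t})_{jk}|$, use (\ref{eqn:var_kills_bias_ridge}) to make the ridge/non-stationarity bias negligible against $\Omega_{\min}(\lambda_2)^{1/2}$, and close with monotonicity of $F$ and the probability integral transform for the min-$p$-value law (\ref{eqn:bonferroni_correction}). The only detail worth recording is that your buffer step needs $\zeta>0$ fixed (or, if $\zeta=0$ as in the simulations, continuity of $F$ at its $\alpha$-quantile), and that the domination $4\lambda_1 s/\phi_0^2\le\lambda_1^{1-\xi}$ is an implicit growth restriction on $s$ through the user-chosen $\xi$, consistent with the paper's discussion following (\ref{eqn:raw-p-values}).
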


The proof of Theorem \ref{thm:multiple_testing} is standard by combining the argument of Theorem 2 in \cite{buhlmann2013} and Theorem \ref{thm:representation}. Therefore, we omit the proof. Condition (\ref{eqn:var_kills_bias_ridge}) essentially requires that the shrinkage and non-stationarity biases of the tv-ridge estimator together are dominated by the variance; see also the representation (\ref{eqn:decomposition}), (\ref{eqn:decomposition_stochastic}), (\ref{eqn:decomposition_bias}), and (\ref{eqn:decomposition_marginal}). This is mild condition for two reasons. First, in view that variance of the tv-ridge estimator is lower bounded when $\lambda_2$ is small enough; c.f. (\ref{eqn:min_diag_invGram}), the first term is quite weak in the sense that the tv-ridge estimator acts on a much smaller subspace with dimension $|N_t|$ than the original $p$-dimensional vector space. Second, for the choice of penalty parameter of $\lambda_1$ in Theorem \ref{thm:representation}, the term $s^{1/2}b_n$ in (\ref{eqn:var_kills_bias_ridge}) is at most $\lambda_1$. Hence, the bias correction (including the projection and non-stationary parts) in the inference algorithm (\ref{eqn:raw-p-values}) has a dominating effect on the second term of (\ref{eqn:var_kills_bias_ridge}). Consequently, provided $\lambda_2$ is small enough, the bias correction step in computing the raw p-value asymptotically approximates the stochastic component in the tv-ridge estimator.

\begin{rmk}
\label{rmk:fdr}
The Bonferroni correction (\ref{eqn:bonferroni_correction}) for the raw p-values is often conservative and thus it may be sub-optimal in power. In our simulation studies, it seems that detection power is reasonable while the FWER is controlled at $0.05$; c.f. Table \ref{tab:simulation_n=300_p=300_s=3_1} and \ref{tab:simulation_n=300_p=300_s=3_1_ctd}. To improve the power, one can consider the control of the false discovery proportion (FDP) by the principal factor approximation (PFA) method proposed in \cite{fanhangu2012,fanha2016}. By Theorem \ref{thm:representation}, under the global null hypothesis $H_{0,t}: \beta_1(t) = \cdots = \beta_p(t) = 0$, we have $\hat{\mbf\beta}(t) - \mbf\gamma(t) \sim N(\vzero, \Omega(\lambda_2))$ with a known covariance matrix $\Omega(\lambda_2)$. Therefore, our test statistic is jointly normal and the PFA can be applied to control the FDP if $\Omega(\lambda_2)$ can be well approximated by the covariance matrix of a factor model plus a weakly dependent component. \cite{fanhangu2012} provided a practical procedure to estimate the FDP.
\end{rmk}

Next, we relax the iid assumption on the errors to allow temporal dependence.

\begin{thm}[Gaussian process errors]
\label{thm:representation_GP}
Suppose that the error process $e_i$ is a mean-zero stationary Gaussian process of form (\ref{eqn:linearprocess}) such that $|a_m| \le K (m+1)^{-\varrho}$ for some $\varrho \in (1/2,1) \cup (1,\infty)$ and finite constant $K> 0$. Under Assumptions 2-6 and the notation of Theorem \ref{thm:representation} with
\begin{equation}
\label{eqn:lambda_0+L_t_GP}
\lambda_0 = \left\{ 
\begin{array}{cc}
4 \sigma L_{t,2} |\va|_1 \sqrt{\log{p}} \quad & \text{if } \varrho > 1 \\
C_{\varrho,K} \sigma L_{t,2} n^{1-\varrho} \sqrt{\log{p}} \quad & \text{if } 1 > \varrho > 1/2 \\
\end{array} \right. ,
\end{equation}
where $\va = (a_0, a_1,\cdots)^\top$, we have the representation of $\hat{\mbf\beta}(t)$ in (\ref{eqn:decomposition})--(\ref{eqn:gamma_marginal_beta_j=0}) with probability tending to one.
\end{thm}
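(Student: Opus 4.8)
The plan is to exploit that Theorem~\ref{thm:representation_GP} states exactly the same conclusion as Theorem~\ref{thm:representation}; only the penalty threshold $\lambda_0$ is replaced. Inspecting the proof of Theorem~\ref{thm:representation}, the error law enters through only two channels: (i) the Gaussianity of the stochastic term $\vz(t)$, which gives $\vz(t)\sim N(\vzero,\Omega(\lambda_2))$ as in (\ref{eqn:decomposition_stochastic}); and (ii) the high-probability event $\{2|{\cal X}_t^\top{\cal E}_t|_\infty\le\lambda_0\}$ on which the tv-Lasso penalty dominates the noise gradient, which in turn drives the oracle bound $|\tilde{\mbf\beta}(t)-\mbf\beta(t)|_1=O(\lambda_1 s/\phi_0^2)$ and hence the Lasso-correction term in (\ref{eqn:decomposition_bias}). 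All remaining ingredients---the ridge algebra, the projection-bias decomposition, and the design bounds supplied by Assumptions~2--6---are distribution-free. So I would verify that (i) transfers for free and then re-derive $\lambda_0$ under temporal dependence for (ii).

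For (i), observe that $\Omega(\lambda_2)$ in (\ref{eqn:Omega}) is already defined through the general covariance $\Sigma_{e,t}$, and that the stochastic part of the tv-ridge estimator is the linear image $({\cal X}_t^\top{\cal X}_t+\lambda_2 I_p)^{-1}{\cal X}_t^\top{\cal E}_t$. Since the process $(e_i)$ is Gaussian and the design is deterministic, ${\cal E}_t$ is a centered Gaussian vector with covariance $W_t^{1/2}\Sigma_{e,t}W_t^{1/2}$, so $\vz(t)$ is \emph{exactly} $N(\vzero,\Omega(\lambda_2))$ with no modification. Hence (\ref{eqn:decomposition_stochastic}) and the marginal statement (\ref{eqn:decomposition_marginal}) hold verbatim once (ii) is settled.

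The substance is (ii). Fixing $j$ and writing $u_i=w(i,t)X_{ij}$, the coordinate $({\cal X}_t^\top{\cal E}_t)_j=\sum_{i\in N_t}u_i e_i$ is centered Gaussian with variance $u^\top\Gamma_t u$, where $\Gamma_t=(\gamma_e(i-i'))_{i,i'\in N_t}$ and $\gamma_e(h)=\sigma^2\sum_{m\ge 0}a_m a_{m+|h|}$. Bounding $u^\top\Gamma_t u\le\rho_{\max}(\Gamma_t)|u|_2^2$ with $|u|_2^2=\sum_{i\in N_t}w(i,t)^2 X_{ij}^2\le L_{t,2}^2$ reduces matters to $\rho_{\max}(\Gamma_t)$, which I would control by the row-sum (Gershgorin) bound $\rho_{\max}(\Gamma_t)\le\sum_{|h|<|N_t|}|\gamma_e(h)|$. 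In the SRD regime $\varrho>1$, summability yields $\sum_h|\gamma_e(h)|\le\sigma^2(\sum_m|a_m|)^2=\sigma^2|\va|_1^2$, so each coordinate has standard deviation at most $\sigma|\va|_1 L_{t,2}$; a union bound with the Gaussian tail over the $p$ coordinates then shows that $\lambda_0=4\sigma L_{t,2}|\va|_1\sqrt{\log p}$ makes $\{2|{\cal X}_t^\top{\cal E}_t|_\infty\le\lambda_0\}$ hold with probability $1-O(1/p)$, the factor $4$ producing tail exponent $-2\log p$. In the LRD regime $1/2<\varrho<1$ I would first establish the decay $|\gamma_e(h)|\le C_{\varrho,K}\sigma^2(h+1)^{1-2\varrho}$ by splitting the defining sum at $m\asymp h$, and then sum it across the window, $\sum_{|h|<|N_t|}(h+1)^{1-2\varrho}\asymp|N_t|^{2-2\varrho}$ because $1-2\varrho\in(-1,0)$. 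With $|N_t|\le n$ this gives $\rho_{\max}(\Gamma_t)\le C_{\varrho,K}\sigma^2 n^{2-2\varrho}$, hence per-coordinate standard deviation $O(\sigma n^{1-\varrho}L_{t,2})$, and the same maximal-inequality step recovers $\lambda_0=C_{\varrho,K}\sigma L_{t,2}n^{1-\varrho}\sqrt{\log p}$ as in (\ref{eqn:lambda_0+L_t_GP}).

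With $\lambda_0$ re-established on a probability-tending-to-one event in each regime, I would insert it into the remainder of the proof of Theorem~\ref{thm:representation} unchanged: the prescription $\lambda_1\ge 2(\lambda_0+2C_0 L_{t,1}b_n(s|N_t|\overline{w}_t)^{1/2}\varepsilon_0^{-1})$ keeps the penalty above the noise gradient, the oracle inequality and the projection-bias analysis deliver (\ref{eqn:decomposition})--(\ref{eqn:gamma_marginal_beta_j=0}), and the Gaussian conclusion follows from (i). The main obstacle I anticipate is the LRD variance bound---pinning down the autocovariance decay $|\gamma_e(h)|\le C_{\varrho,K}\sigma^2(h+1)^{1-2\varrho}$ with the correct $\varrho$-dependent constant and verifying that the non-summable tail inflates the effective noise level by precisely the factor $n^{1-\varrho}$ rather than a worse power. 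The Gershgorin bound on $\rho_{\max}(\Gamma_t)$ is convenient but may be loose; should a sharper threshold be required, one could instead control the top eigenvalue through the spectral-density singularity $f_e(\omega)\asymp|\omega|^{2\varrho-2}$ at the origin, at the price of more delicate Toeplitz estimates.
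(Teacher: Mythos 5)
Your proposal is correct and follows essentially the same route as the paper: the paper's proof likewise reduces everything to (a) noting $\Cov({\cal E}_t)=W_t^{1/2}\Sigma_{e,t}W_t^{1/2}$ so the Gaussian stochastic part carries over verbatim, and (b) replacing the iid concentration bound for $\max_{j\le p}\left|\sum_{i\in N_t}w(i,t)X_{ij}e_i\right|$ by separate SRD/LRD tail bounds (its Lemma \ref{lem:concentration_GP}), leaving the rest of the proof of Theorem \ref{thm:representation} unchanged. The only cosmetic difference is that you bound the variance of the weighted sum via row-sums of the autocovariance Toeplitz matrix $\Gamma_t$, whereas the paper rewrites the sum in the innovation representation $S_n=\sum_{m}b_m\xi_m$ and bounds $\sum_m b_m^2$ by Cauchy--Schwarz (which also lets it cover sub-Gaussian innovations); both yield the same $|\va|_1^2$ and $n^{2-2\varrho}$ variance factors and hence the same $\lambda_0$.
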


From Theorem \ref{thm:representation_GP}, the temporal dependence strength has a dichotomous effect on the choice of $\lambda_0$, and therefore on the asymptotic properties of $\hat{\mbf\beta}(t)$. For $e_i$ with SRD, we have $|\va|_1 < \infty$ and $\lambda_0 \asymp \sigma L_{t,2} \sqrt{\log{p}}$. Therefore, the bias-correction part $\mbf\gamma(t)$ of estimating $\mbf\beta(t)$ has the same rate of convergence as the iid error case. The temporal effect only plays a role in the long-run covariance matrix of the stochastic part $\vz(t)$. If $e_i$ has LRD, then the temporal dependence has impact on both $\mbf\gamma(t)$ and $\vz(t)$. In addition, the choice of the bandwidth parameter $b_n$ is different from the SRD and iid cases. In particular, the optimal bandwidth for $\varrho \in (1/2,1)$ is $O((\log{p}/n^\varrho)^{1/3})$ which is much larger than $O((\log{p}/n)^{1/3})$ in the iid and SRD cases, assuming $s$ is bounded. The boundary case $\varrho=1$ can also be characterized; details are omitted.

We also relax the moment condition on the errors that, in the iid error case, are assumed to be zero-mean Gaussian. First, it is easy to relax this assumption to distributions with sub-Gaussian tails (see Definition \ref{defn:subgaussian_rv} in the Supplementary Material) and Theorem \ref{thm:representation} and \ref{thm:multiple_testing} continue to hold, in view that the large deviation inequality and the Gaussian approximation for a weighted partial sum of the error process only depend on the tail behavior and therefore on moments of $e_i$. Second and more importantly, the sub-Gaussian assumption may be knocked down to allow iid noise processes with algebraic tails, or equivalently $e_i$ with moments up to a finite order. The consequence of this relaxation is that a larger penalty parameter for the tv-Lasso is needed for errors with polynomial moments. Let $\Xi$ be the square root matrix of $\Omega(\lambda_2) / \sigma^2$ (i.e. $\Omega(\lambda_2) = \sigma^2 \Xi \Xi^\top$) and $\mbf\xi_j$ be the $j$-th row of $\Xi$.

\begin{thm}[Heavy-tailed errors]
\label{thm:non-gaussian}
Under the conditions of Theorem \ref{thm:representation} with $\E|e_i|^q < \infty, q > 2$, choose
\begin{equation}
\lambda_0 = C_q \max \left\{ (p \mu_{n,q})^{1/q}, \; \sigma L_{t,2} (\log{p})^{1/2} \right\}, \qquad \text{for large enough  } C_q > 0,
\end{equation}
where $\mu_{n,q} = \sum_{i \in N_t} |w(i,t) X_{ij}|^q$. If $|\mbf\xi_j|_q = o(|\mbf\xi_j|_2)$ for all $j=1,\cdots,p$, then (\ref{eqn:decomposition}) holds with probability tending to one and Theorem \ref{thm:multiple_testing} holds.
\end{thm}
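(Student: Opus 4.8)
The plan is to reuse the architecture of the proof of Theorem \ref{thm:representation}, isolating the only two places where iid Gaussianity was actually used—the tv-Lasso deviation (noise) bound that pins down $\lambda_0$, and the exact normality of the stochastic term $\vz(t)$—and to replace each by a moment-based argument valid under $\E|e_i|^q<\infty$. Everything else in the decomposition is algebraic and distribution-free, so the work concentrates on (i) re-deriving the noise tail bound to obtain the stated $\lambda_0$, and (ii) upgrading the exact Gaussian statement $\vz(t)\sim N(\vzero,\Omega(\lambda_2))$ to an asymptotic one.

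The bias terms are inherited for free. The identity $\hat{\mbf\beta}(t)=\mbf\beta(t)+\vz(t)+\mbf\gamma(t)$ and the bounds (\ref{eqn:decomposition_bias}), (\ref{eqn:gamma_marginal_beta_j=0}) enter the errors only through the event that the tv-Lasso oracle inequality holds, i.e. through $\max_{j\le p}|\sum_{i\in N_t}w(i,t)X_{ij}e_i|\le\lambda_0/2$ with probability tending to one; the enlarged penalty $\lambda_1\gtrsim\lambda_0$ exists precisely to absorb the heavier Lasso fluctuations. So it suffices to bound $Z_j=\sum_{i\in N_t}w(i,t)X_{ij}e_i$, a sum of independent mean-zero summands with finite $q$-th moment. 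I would truncate $e_i$ at a level $\tau\asymp(p\mu_{n,q})^{1/q}$: the tail part is controlled by the union bound $\Prob(\exists i:|e_i|>\tau)\le|N_t|\E|e_i|^q/\tau^q$ together with the moment sum $\mu_{n,q}$ (the per-coordinate sum $\sum_i|w(i,t)X_{ij}|^q$, maximized over $j$ in the union step), which reproduces the $(p\mu_{n,q})^{1/q}$ scale; the bounded truncated part is handled by a Bernstein/sub-exponential concentration with variance proxy $\sigma^2\sum_i w(i,t)^2X_{ij}^2\le\sigma^2 L_{t,2}^2$, so the union bound over $j\le p$ yields the $\sigma L_{t,2}\sqrt{\log p}$ scale. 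Taking the maximum of the two and $C_q$ large forces $\Prob(\max_j|Z_j|>\lambda_0/2)\to0$, which is exactly the stated $\lambda_0$; the truncation split is what recovers the sharp $\sqrt{\log p}$ for the light part instead of the cruder $p^{1/q}$ from a bare Markov bound.

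For the stochastic part I would use $\Omega(\lambda_2)=\sigma^2\Xi\Xi^\top$, which identifies the $j$-th coordinate as the linear statistic $z_j(t)=\mbf\xi_j^\top(e_i)_{i\in N_t}=\sum_i\xi_{ji}e_i$ with $\Var(z_j(t))=\sigma^2|\mbf\xi_j|_2^2=\Omega_{jj}(\lambda_2)$. Applying the Lyapunov CLT to the independent triangular array $\{\xi_{ji}e_i\}_{i\in N_t}$, the order-$q$ Lyapunov ratio for the studentized coordinate $\Omega_{jj}^{-1/2}(\lambda_2)z_j(t)$ equals $(\E|e_i|^q/\sigma^q)\,|\mbf\xi_j|_q^q/|\mbf\xi_j|_2^q$, which vanishes exactly because the hypothesis $|\mbf\xi_j|_q=o(|\mbf\xi_j|_2)$ is assumed and $\E|e_i|^q<\infty$. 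Thus the assumed moment condition on $\Xi$ is nothing but the Lyapunov condition, and (\ref{eqn:decomposition_marginal}) holds in the asymptotic sense, completing the representation.

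It remains to transfer this to Theorem \ref{thm:multiple_testing}. With marginal asymptotic normality in hand and the bias dominated through condition (\ref{eqn:var_kills_bias_ridge}), one runs the Bonferroni-type combination of Theorem 2 in \cite{buhlmann2013} with the representation: under the global null the adjusted p-value uses $F(\cdot)$ computed from $N(\vzero,\Omega(\lambda_2))$, so controlling $\Prob(\FP_\alpha(t)>0)=\Prob(\min_{j:\beta_j(t)=0}\tilde P_j\le\text{thr})$ requires the null subvector of $\vz(t)$ to be simultaneously close to its Gaussian law. This lift from marginal to simultaneous normality over a family of size $p\to\infty$ is, I expect, the main obstacle: a single coordinatewise Lyapunov CLT does not by itself govern the minimum p-value across a growing null set. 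I would close it either through a high-dimensional central limit theorem for the relevant max/min functional (e.g. a Chernozhukov--Chetverikov--Kato-type bound, whose remainder is again driven by the ratios $|\mbf\xi_j|_q/|\mbf\xi_j|_2$) or through the \cite{buhlmann2013} device that reduces the familywise statement to the already-established marginal approximations together with the pre-computed correction $F$. By contrast the deviation bound of the second step is comparatively routine; its only delicate point is balancing the truncation level $\tau$ so that the tail and Bernstein contributions vanish at the same scale, reproducing the precise $\max\{(p\mu_{n,q})^{1/q},\sigma L_{t,2}\sqrt{\log p}\}$ form.
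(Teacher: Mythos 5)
Your architecture matches the paper's: both isolate the two uses of Gaussianity (the noise deviation bound that sets $\lambda_0$, and the normality of the ridge stochastic term) and replace each by a moment-based argument, inheriting the bias bounds unchanged. The differences are in the tools. For the deviation bound the paper simply invokes Nagaev's inequality, whose two terms $(1+2/q)^q\kappa_q\, p\mu_{n,q}/(\sigma L_{t,2}\varepsilon)^q$ and $2p\exp(-c_q\varepsilon^2)$ are exactly the tail/truncated split you re-derive by hand; your version is the standard proof of that inequality, though your truncation level $\tau\asymp(p\mu_{n,q})^{1/q}$ is dimensionally off (it should be calibrated to the summands $w(i,t)X_{ij}e_i$, not to $e_i$) — a cosmetic slip, not a structural one. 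For the stochastic part the routes genuinely diverge: you apply the Lyapunov CLT and correctly observe that $|\mbf\xi_j|_q=o(|\mbf\xi_j|_2)$ \emph{is} the Lyapunov condition, which yields weak convergence of each studentized marginal; the paper instead uses Shao's strong Gaussian approximation to couple $\tilde\theta_j(t)-\E\tilde\theta_j(t)$ with an exactly $N(0,\Omega_{jj}(\lambda_2))$ variable up to an error $O_\Prob(|\mbf\xi_j|_q)$ on an enriched probability space. The coupling is what lets the paper assert the decomposition (\ref{eqn:decomposition}) literally, with $\vz(t)$ Gaussian and the discrepancy absorbed into a negligible remainder; the CLT route delivers only distributional statements and so requires rephrasing the conclusion. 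Your flagged obstacle — that a coordinatewise CLT does not control the minimum p-value over a null set of size $p\to\infty$ — is well taken, but the paper's own argument is equally per-coordinate and does not close that gap either; your suggestion of a Chernozhukov--Chetverikov--Kato-type bound is if anything more careful than what the paper supplies.
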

The assumption $|\mbf\xi_j|_q = o(|\mbf\xi_j|_2)$ is needed to ensure the asymptotic validity of the Gaussian approximation of the ridge component (\ref{eqn:decomposition_stochastic}) for non-Gaussian data.

\section{Simulation studies}
\label{sec:experiments}

\subsection{Simulation setup}

In the simulation studies, we generated the $n \times p$ design matrix with iid rows sampled from $N(\vzero,\Sigma_\mathbf{X})$ for $n = 300$ and $p = 300$. We considered two covariance structures on the design matrix: (i) $\Sigma_\mathbf{X} = I_p$; (ii) $\Sigma_{\mathbf{X}} = T$, where $T = (t_{jk})_{j,k=1}^p$ and $t_{jk} = 0.5^{\vert j-k \vert}$. The time-varying coefficient vectors $\mbf\beta(t)$ had $s = 3$ non-zero elements and $p - 3$ zeros for all $t \in [0,1]$. The non-zero elements in $\mbf\beta(t)$ were generated by sampling nodes from a uniform distribution $U(-2.5, 2.5)$ at regular time points and smoothly interpolating on the interval $[0,1]$ using the cubic splines.  We simulated the following stationary error processes.
\begin{enumerate}
\item The $e_i$ are iid $N(0, 1)$.
\item $e_i = \varphi e_{i-1} + \xi_i$ is an AR(1) process where $\varphi \in \{0.2, 0.5\}$ and $\xi_i$ are iid $N(0,1)$.
\item The $e_i$ are iid Student's $t(3) / \sqrt{3}$.
\item $e_i$ is a long-memory process $e_i = \sum_{m=0}^\infty (m+1)^{-\varrho} \xi_{i-m}$, where $\varrho=0.75$ and $\xi_i$ are iid Gaussian with mean zero.
\end{enumerate}

We compared the performance of the proposed method with the following.
\begin{enumerate}
\item (TV-Lasso) - The time-varying Lasso, the kernel smoothed time-varying LASSO defined in (\ref{eqn:tv-lasso}), where $\lambda_1$ is selected by the cross-validation (CV). 
\item (FP-Lasso) - The false-positive Lasso, where $\lambda_1$ is tuned to match the FWER of the proposed method. This allowed us to compare the power at similar levels of FWER. 
\item (TV-LDPE) - An adaptation of the de-biased LASSO inference procedure by \cite{zhangzhang2013} to the kernel smoothed, time-varying setting.
\item (TV-SDL) - An adaptation of the SDL test of \cite{javanmardmontanari2014a} to the kernel smoothed, time-varying setting. 
\item (Non-TV) - The original non-time-varying method of \cite{buhlmann2013} that ignores the dynamic structures. The penalty parameter $\lambda_1$ in the Lasso is set to the scaled Lasso parameter $\sqrt{2 \log(p) / n}$.
\end{enumerate}
In all time-varying models, we used the kernel bandwidth $b_n = 0.1$. For the proposed method, we used $\lambda_2 = 1/n$ and $\zeta = 0$. We let $P_{j,t,m}$ be the multiplicity-adjusted p-value for testing the hypothesis $H_{0,j,t}: \beta_j(t) = 0$ for $t \in \varpi = [b_n, 1-b_n]$ in the $m$-th Monte Carlo simulation for $m=1,\cdots,M$. For TV-LDPE, TV-SDL, the proposed method and its non-tv version, we adopted the following performance measures.
\begin{enumerate}
\item The (averaged) false positive rate (FPR) over the interval $\varpi$,
\begin{equation*}
\frac{1}{n(1-2b_n)(p-s)M} \sum_{t \in \varpi} \sum_{j \in \mathcal{S}^\mathsf{c}} \sum_{m=1}^M \vone(P_{j,t,m} \leq \alpha).
\end{equation*}

\item The (averaged) false negative rate (FNR) over the interval $\varpi$,
\begin{equation*}
\frac{1}{n(1-2b_n)sM} \sum_{t \in \varpi}  \sum_{j \in \mathcal{S}} \sum_{m=1}^M \vone(P_{j,t,m} > \alpha).
\end{equation*}

\item The (averaged) FWER over the interval $\varpi$,
\begin{equation*}
\text{FWER} = \frac{1}{n(1-2b_n)M} \sum_{t \in \varpi} \sum_{m=1}^M \vone(\min_{j \in \mathcal{S}^\mathsf{c}} P_{j,t,m} \leq \alpha).
\end{equation*}
\end{enumerate}

For the Lasso-based methods (TV-Lasso and FP-Lasso), the probabilities are replaced by the corresponding indicators of whether or not the estimated coefficients are zero.

\subsection{Empirical results}

For each simulation setup, we report the FPR, RNR, FWER, and the root mean square errors (RMSE) of the estimates. The results are shown in Tables \ref{tab:simulation_n=300_p=300_s=3_1} and \ref{tab:simulation_n=300_p=300_s=3_1_ctd}, from which we make several observations. First, TV-Lasso and the method of \cite{buhlmann2013} do not control the FWER, while the proposed method can control the FWER at the nominal level $\alpha=0.05$ in all setups. Second, the proposed method has uniformly higher power than FP-Lasso, the TV-Lasso tuned to match the FWER with our method. This is probably explained by the bias of the $\ell_1$ regularization in the TV-Lasso. Third, for the design matrix with iid Gaussian entries, TV-LDPE and TV-SDL have comparable performance as our proposed method in terms of the power, while all three methods have the FWER controlled below 0.05. TV-LDPE and TV-SDL are more sensitive to the design matrix than the proposed method; for the Toeplitz design matrix $T$, TV-LDPE and TV-SDL seem to lose control on the FWER. Moreover, the FWER, FNR, and RMSE are larger as the dependence level of the error process grows and as the tail of the error distribution becomes thicker.  Finally, the proposed method is computationally more economical than the competing methods TV-LDPE and TV-SDL. Table \ref{tab:runtime_1} shows the runtimes on an Intel i5-4790K with the Intel MKL linear algebra libraries (software and platform: R 3.2.2 for Windows).

\begin{table}[h!]
\centering
\scriptsize
\caption{Simulation results. $n=300, p=300, s=3$.}
\label{tab:simulation_n=300_p=300_s=3_1}
\begin{tabular}{lcccc|cccc}
& \multicolumn{4}{c}{$\vx_i \stackrel{\text{iid}}{\sim} N(\vzero, I_p)$, $e \sim N(\vzero, I_n)$} & \multicolumn{4}{c}{$\vx_i \stackrel{\text{iid}}{\sim} N(\vzero, T)$, $e \sim N(\vzero, I_n)$}\\\toprule
Method & FPR & FNR & FWER & RMSE & FPR & FNR & FWER & RMSE\\\midrule
TV-Lasso & $7.51 \times 10^{-2}$  & 0.0551 & 1 & 0.0537 & $1.55 \times 10^{-1}$ &  0.0684 & 1 & 0.0520\\
FP-Lasso & $1.50 \times 10^{-4}$  & 0.2352 & 0.0344 & 0.1124 & $2.81 \times 10^{-5}$ &  0.5170 & 0.0063 & 0.1318\\
Proposed & $1.50 \times 10^{-4}$  & 0.1889 & 0.0346 & 0.1838 & $2.81 \times 10^{-5}$ & 0.4072 & 0.0063 & 0.1984\\
TV-LDPE & $1.29 \times 10^{-4}$  & 0.1981 & 0.0254 & 0.2652 & $3.77 \times 10^{-4}$ & 0.3615 & 0.0743 & 0.2727\\
TV-SDL & $1.53 \times 10^{-4}$  & 0.1848 & 0.0357 & 0.1316 & $1.08 \times 10^{-3}$ &  0.3006 &  0.2119 & 0.1409\\
Non-TV & $4.89 \times 10^{-1}$ & 0.5100 & 0.4600 & 0.5762 & $2.58 \times 10^{-1}$ &  0.7100 & 0.2400 & 0.6084 \\\bottomrule
\end{tabular}

\begin{tabular}{lcccc|cccc}
& \multicolumn{4}{c}{$\vx_i \stackrel{\text{iid}}{\sim} N(\vzero, I_p)$, $e \sim$ AR(1) with $\varphi = 0.2$} & \multicolumn{4}{c}{$\vx_i \stackrel{\text{iid}}{\sim} N(\vzero, T)$, $e \sim$ AR(1) with $\varphi = 0.2$}\\\toprule
Method & FPR & FNR & FWER & RMSE & FPR & FNR & FWER & RMSE\\\midrule
TV-Lasso & $8.93 \times 10^{-2}$ & 0.0563 & 1 & 0.0533 & $1.51 \times 10^{-1}$ &0.0629 &  1 & 0.0519\\
FP-Lasso & $1.78 \times 10^{-4}$ & 0.2363 & 0.0384 & 0.1099 & $7.21 \times 10^{-5}$ & 0.4709 & 0.0173 & 0.1302\\
Proposed & $1.78 \times 10^{-4}$ & 0.1891 & 0.0376 & 0.1836 & $7.21 \times 10^{-5}$ & 0.3434 & 0.0173 & 0.1920\\
TV-LDPE & $9.85 \times 10^{-5}$ & 0.1995 & 0.0215 & 0.2799 & $3.70 \times 10^{-4}$ & 0.3620 & 0.0843 & 0.2725\\
TV-SDL & $1.97 \times 10^{-4}$ & 0.1883 & 0.0419 & 0.1374 & $1.07 \times 10^{-3}$ & 0.3032 & 0.2165 & 0.1404\\
Non-TV & $4.29 \times 10^{-1}$ & 0.5633 & 0.4100 & 0.5557 & $2.58 \times 10^{-1}$ & 0.6933 & 0.2200 & 0.6088\\\bottomrule
\end{tabular}

\begin{tabular}{lcccc|cccc}
& \multicolumn{4}{c}{$\vx_i \stackrel{\text{iid}}{\sim} N(\vzero, I_p)$, $e \sim$ AR(1) with $\varphi = 0.5$} & \multicolumn{4}{c}{$\vx_i \stackrel{\text{iid}}{\sim} N(\vzero, T)$, $e \sim$ AR(1) with $\varphi = 0.5$}\\\toprule
Method & FPR & FNR & FWER & RMSE & FPR & FNR & FWER & RMSE\\\midrule
TV-Lasso & $7.55 \times 10^{-2}$ & 0.0544 & 1 & 0.0537 & $1.51 \times 10^{-1}$ & 0.0611 & 1 & 0.0518\\
FP-Lasso & $1.93 \times 10^{-4}$ & 0.2402 & 0.0431 & 0.1124 & $9.81 \times 10^{-5}$ & 0.4805 & 0.0222 & 0.1303\\
Proposed & $1.93 \times 10^{-4}$ & 0.1809 & 0.0422 & 0.1836 & $9.81 \times 10^{-5}$ & 0.3347 & 0.0235 & 0.1918\\
TV-LDPE & $9.82 \times 10^{-5}$ & 0.2028 & 0.0229 & 0.2756 & $3.71 \times 10^{-4}$ & 0.3618 & 0.0849 & 0.2659\\
TV-SDL & $1.94 \times 10^{-4}$ & 0.1898 & 0.0425 & 0.1374 & $1.01 \times 10^{-3}$ & 0.2993 & 0.2555 & 0.1439\\
Non-TV & $5.49 \times 10^{-1}$ & 0.4500 & 0.5200 & 0.5314 & $1.70 \times 10^{-1}$ & 0.8300 & 0.1500 & 0.6004 \\\bottomrule
\end{tabular}
\end{table}

\begin{table}[h!]
\centering
\scriptsize
\caption{Simulation results (continued). $n=300, p=300, s=3$.}
\label{tab:simulation_n=300_p=300_s=3_1_ctd}

\begin{tabular}{lcccc|cccc}
& \multicolumn{4}{c}{$\vx_i \stackrel{\text{iid}}{\sim} N(\vzero, I_p)$, $e_i \stackrel{\text{iid}}{\sim} t(3)/\sqrt{3}$} & \multicolumn{4}{c}{$\vx_i \stackrel{\text{iid}}{\sim} N(\vzero, T)$, $e_i \stackrel{\text{iid}}{\sim} t(3)/\sqrt{3}$}\\\toprule
Method & FPR & FNR & FWER & RMSE & FPR & FNR & FWER & RMSE \\\midrule
TV-Lasso & $9.14 \times 10^{-2}$ & 0.0518 & 1 & 0.0539 & $1.57 \times 10^{-1}$ &  0.0605 & 1 & 0.0506\\
FP-Lasso & $1.96 \times 10^{-4}$ & 0.2193 & 0.0398 & 0.1120 & $4.20 \times 10^{-5}$ & 0.4547 & 0.0124 & 0.1209\\
Proposed & $1.96 \times 10^{-4}$ &  0.1708 & 0.0439 & 0.1834 & $4.20 \times 10^{-5}$ & 0.3129 & 0.0125 & 0.1885\\
TV-LDPE & $1.26 \times 10^{-4}$ & 0.2041 & 0.0275 & 0.2659 & $3.62 \times 10^{-4}$ & 0.3043 & 0.0810 & 0.2719\\
TV-SDL & $1.90 \times 10^{-4}$ & 0.1903 & 0.0403 & 0.1321 & $9.54 \times 10^{-4}$ & 0.2814 & 0.1960 & 0.1381\\
Non-TV & $5.16 \times 10^{-1}$ & 0.4800 & 0.4800 & 0.5289 & $2.24 \times 10^{-1}$ & 0.7700 & 0.1500 & 0.5962\\\bottomrule
\end{tabular}

\begin{tabular}{lcccc|cccc}
& \multicolumn{4}{c}{$\vx_i \stackrel{\text{iid}}{\sim} N(\vzero, I_p)$, $e \sim$ LRD with $\varrho = 0.75$} & \multicolumn{4}{c}{$\vx_i \stackrel{\text{iid}}{\sim} N(\vzero, T)$, $e \sim$ LRD with $\varrho = 0.75$}\\\toprule
Method & FP(\%) & FN(\%) & FWER & RMSE & FPR & FNR & FWER & RMSE\\\midrule
TV-Lasso & $7.25 \times 10^{-2}$ & 0.0652 & 1 & 0.0499 & $1.77 \times 10^{-1}$ &  0.0805 &  1 & 0.0556\\
FP-Lasso & $1.60 \times 10^{-4}$ & 0.2496 & 0.0450 & 0.1091 & $1.37 \times 10^{-4}$ &  0.5138 &  0.0229 & 0.1381\\
Proposed & $1.60 \times 10^{-4}$ & 0.1783 & 0.0433 & 0.1806 & $1.37 \times 10^{-4}$ &  0.3376 &  0.0243 & 0.1953\\
TV-LDPE & $1.08 \times 10^{-4}$ & 0.2067 & 0.0238 & 0.2653 & $3.68 \times 10^{-4}$ &  0.3648 &  0.0859 & 0.2691\\
TV-SDL & $2.10 \times 10^{-4}$ & 0.1924 & 0.0501 & 0.1386 & $9.29 \times 10^{-4}$ &  0.3014 & 0.0186 & 0.1448\\
Non-TV & $5.19 \times 10^{-1}$ & 0.4800 & 0.4800 & 0.5304 & $5.49 \times 10^{-1}$ &  0.4500 & 0.4100 & 0.5280 \\\bottomrule
\end{tabular}
\end{table}

\begin{table}[h!]
\centering
\scriptsize
\caption{Runtime per 10 simulations.}
\label{tab:runtime_1}

\vspace{0.5pc}
\begin{tabular}{lr}
\toprule
Method & Runtime (in minutes)\\\midrule
TV-Lasso & 0.5\\
FP-Lasso & 9.5\\
Proposed & 13\\
TV-LDPE & $>1000$\\
TV-SDL & $>1000$\\
Non-TV & $<0.5$ \\\bottomrule
\end{tabular}
\end{table}


\section{Data example: learning brain connectivity}
\label{sec:real_data}

We illustrate our proposed method in an application to model the functional brain connectivity in a Parkinson's disease study. The problem is to construct brain connectivity networks from the resting-state functional magnetic resonance imaging (fMRI) data, where slowly time-varying graphs have implications in modeling brain connectivity networks. Traditional correlation analysis of the resting state blood-oxygen-level-dependent (BOLD) signals of the brain showed considerable temporal variation on small time scales, \cite{changglover2010a,hutchison2013a}. In view of the high spatial resolution of fMRI data, brain networks of subjects at rest are believed to be structurally homogeneous with subtle fluctuations in some, but a small number of, connectivity edges, \cite{kiviniemi2005,hutchison2010a}. A popular approach to learn brain connectivity is the neighborhood selection procedure, \cite{meinshausenbuhlmann2006}. Therefore, high-dimensional TVCM with a small number of nonzero components is a natural approach to study the time-evolving sparse brain connectivity networks, in which the time-varying coefficients reflect the dynamic features of the corresponding edges in the networks. The neighborhood selection approach is an approximation to the full multivariate distributions while ignoring the correlation among the node-wise responses and this may cause certain power loss in finite samples. \cite{meinshausenbuhlmann2006} showed that in terms of variable selection, these two approaches are asymptotically equivalent.

Our data example uses fMRI data collected from a study of patients with Parkinson's disease (PD) and their respective normal controls. PD is typically characterized by deviations in functional connectivity between various regions of the brain. Additionally, the resting state functional connectivity has been shown as a candidate biomarker for PD progression and treatment, where more advanced stages or manifestations of PD are associated with greater deviations from normal connectivity. Each resting state data matrix in our example contains 240 time points and 52 brain regions of interest (ROI). The time points are evenly sampled and the time indices are normalized to [0,1].

The brain connectivity network was constructed using the neighborhood selection procedure. In essence, it is a sequence of time-varying linear regressions enumerating each ROI as the response variable and sparsely regressing on all the other ROIs. Figure \ref{fig:network_normal} and \ref{fig:network_pd} show the estimated graphs of a normal subject and a PD subject at three sequential time points around $t=0.25$ based on the proposed method. Red nodes are ROIs known to be associated with motor control and blue nodes are ROIs either known to be unrelated to motor control or whose functions in humans are not well understood. Different patterns of connectivity in the networks can be found by comparing normal and PD subjects. From the graphs, there are slow changes in the networks over time: most edges are preserved on a small time scale, but there are a few edges evolving over time. For instance, in a PD subject, ROI 1 and ROI 40 are unconnected in the first network but they are connected in the second network and remain connected in the third network. 

\begin{figure}[!ht]
  \centering
  \caption{Connectivity networks in control subject around $t = 0.25$ based on the proposed method.}
  \vspace{1pc}
  \includegraphics[width=400pt]{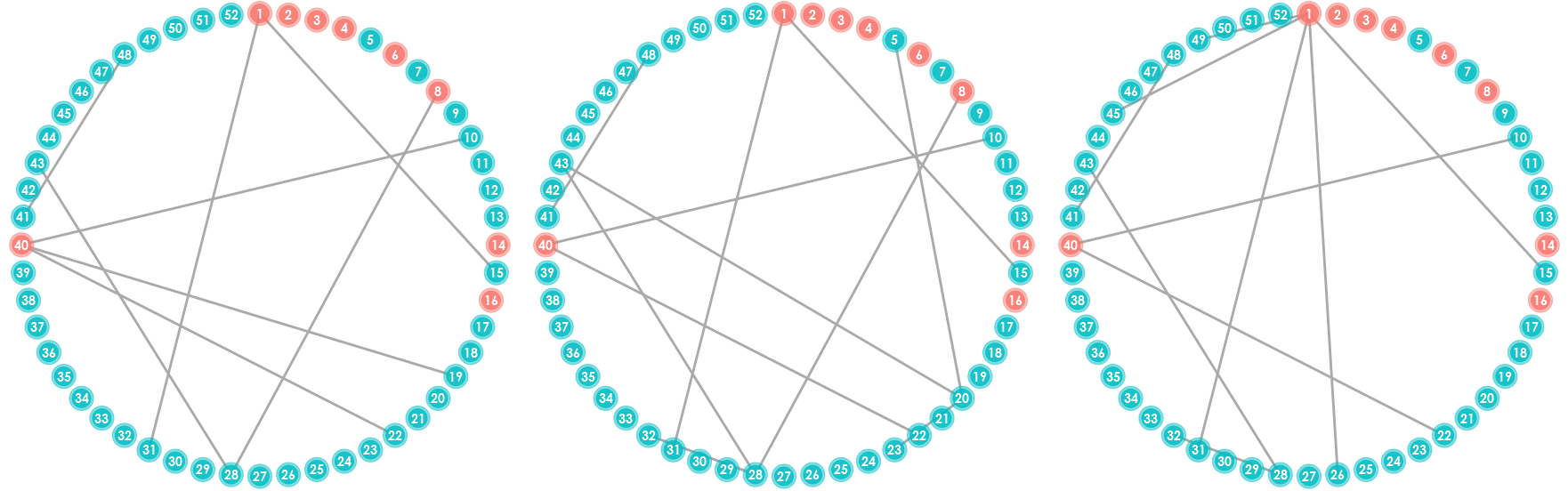}
  \label{fig:network_normal}
  \vspace{1pc}
  \caption{Connectivity networks in Parkinson's Disease subject around $t = 0.25$ based on the proposed method.}
  \vspace{1pc}
  \includegraphics[width=400pt]{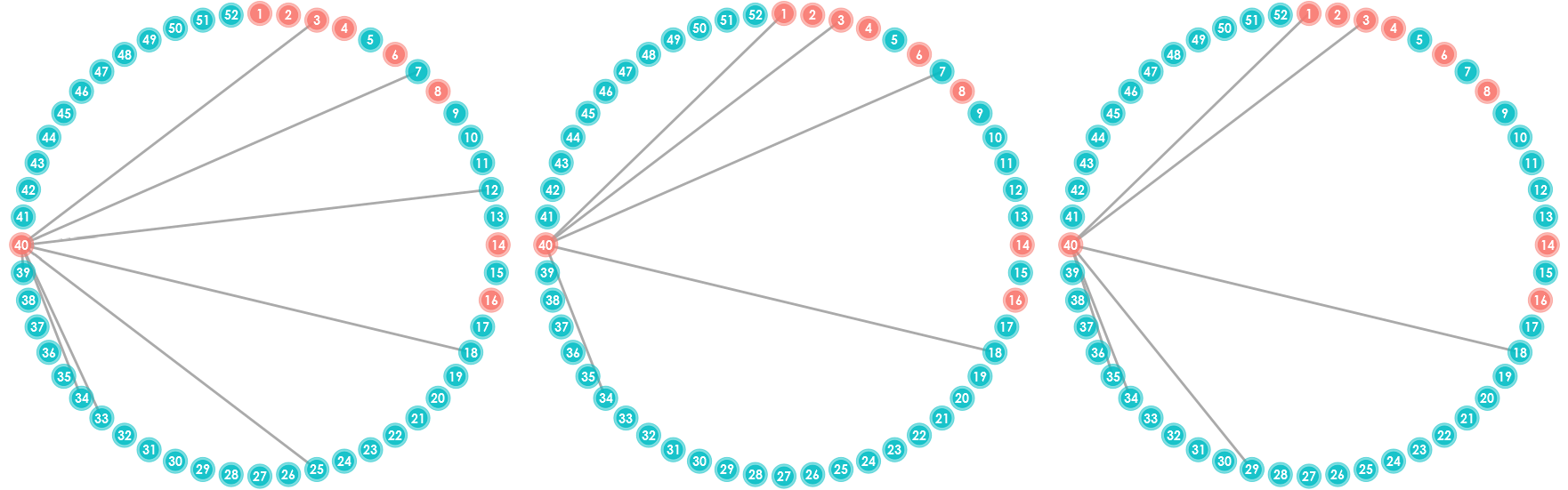}
  \label{fig:network_pd}
\end{figure}



We also plot the connectivity graphs generated by the competing methods TV-LDPE and TV-SDL; see Figure \ref{fig:network_ldpe} and \ref{fig:network_sdl}. These graphs are denser than with the proposed method and are harder to interpret. Besides, as commented in the simulation studies, TV-LDPE and TV-SDL are more computationally expensive. The method of \cite{buhlmann2013} cannot be used to capture the dynamic features of brain connectivity networks. 

\begin{figure}[!ht]
  \centering
  \caption{A connectivity network based on TV-LDPE.}
  \vspace{1pc}
  \includegraphics[width=400pt]{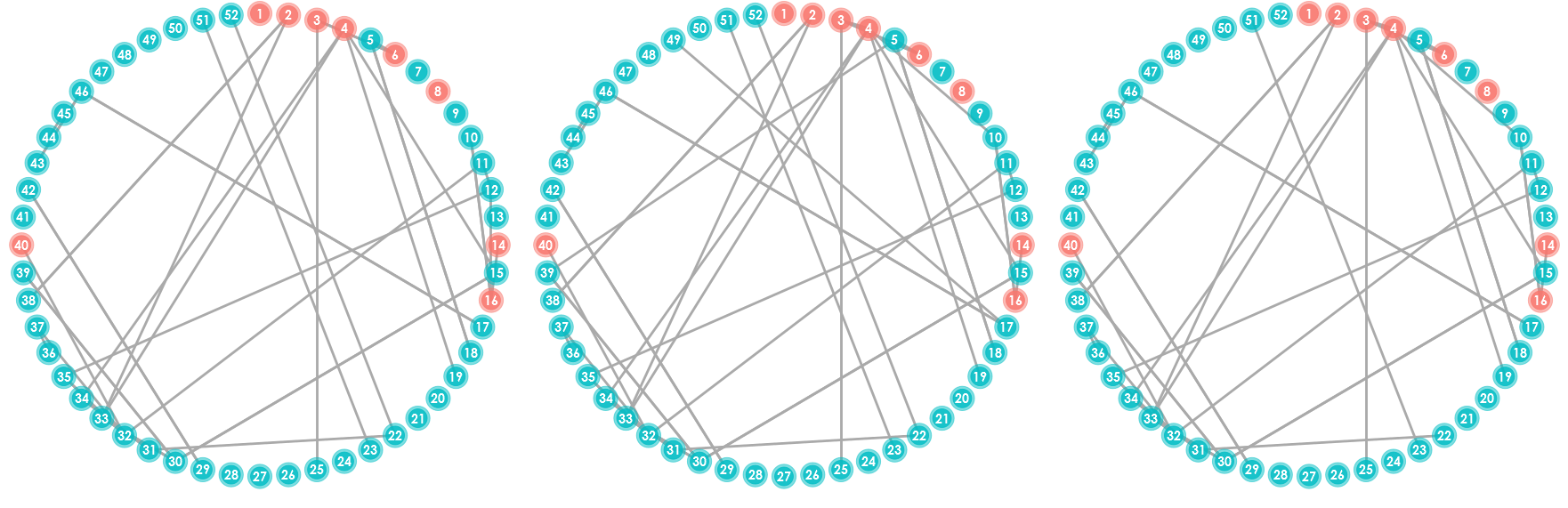}
  \label{fig:network_ldpe}
  \vspace{1pc}
  \caption{A connectivity network based on TV-SDL.}
  \label{fig:network_sdl}
  \vspace{1pc}
  \includegraphics[width=400pt]{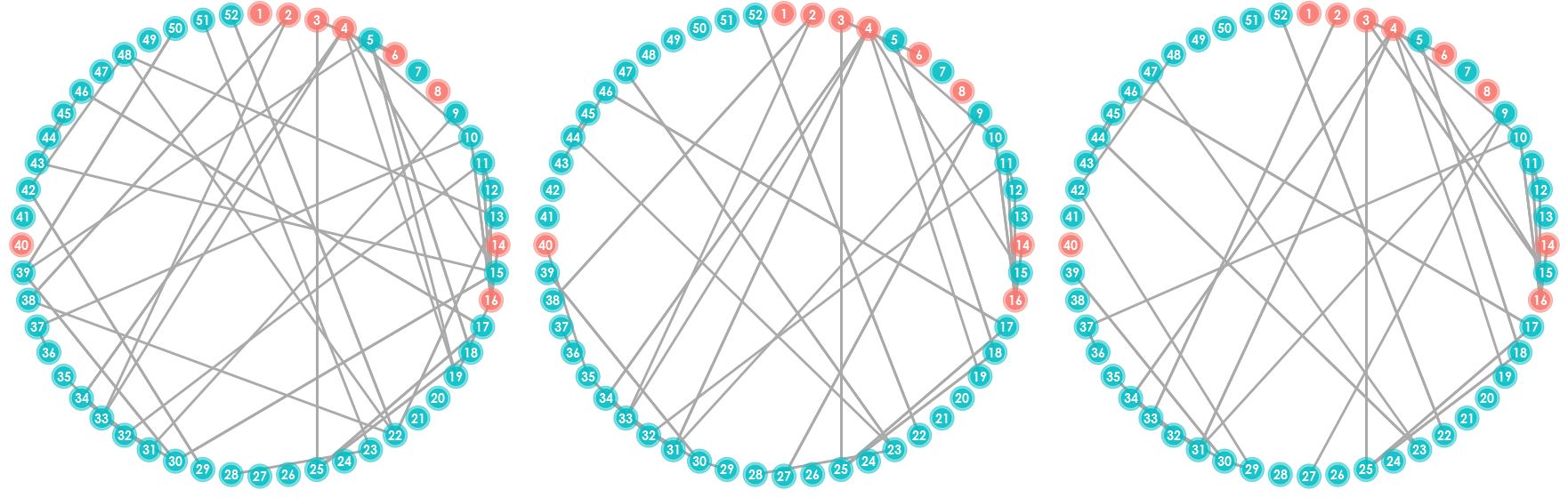}
\end{figure}

\section{Discussions}
\label{sec:discussion}

This paper presents a pointwise inference algorithm for high-dimensional TVCM that can asymptotically control the FWER. Based on the current work, an interesting improvement would be to study simultaneous inference. Construction of the simultaneous confidence band (SCB) for the time-varying coefficients is useful for testing their parametric forms in high dimensions. This is a more challenging topic, which requires substantial additional work and probability tools that are beyond the scope of this paper.

Our brain connectivity application is a subject-by-subject analysis. To perform the group analysis on the population level, a hierarchical linear model is more appropriate. When $p$ is fixed, the linear mixed-effects model is widely used in performing the multi-level group analysis in fMRI, \cite{beckmanjenkinsonsmith2003,lindquist2008,skup2010}. The reason is that the generalized least squares (GLS) estimator of a two-level model is inferentially equivalent to the GLS estimator of the corresponding single-level model, provided that the second-level covariance is the sum of the group covariance and the covariance of the first-level estimate, \cite{beckmanjenkinsonsmith2003}. Extension of the testing problem on the population parameters based on the ridge and Lasso estimates when $p \to \infty$ is another interesting future research topic.

\section{Proof}
\label{sec:proof}

\begin{proof}[Proof of Theorem \ref{thm:representation}]
Observe that ${\cal X}_t \mbf\beta(t) = {\cal X}_t \mbf\theta(t)$ since $\mbf\theta(t) = P_{{\cal R}_t}\mbf\beta(t)$. Using the closed-form formulae for the tv-ridge estimator (\ref{eqn:tv-ridge}) and by (\ref{eqn:taylor}), we have
\begin{equation}
\label{eqn:bias_ridge_total}
\text{bias}(\tilde{\mbf\theta}(t)) = \E(\tilde{\mbf\theta}(t)) - \mbf\theta(t) = ({\cal X}_t^\top {\cal X}_t + \lambda_2 I_p)^{-1} {\cal X}_t^\top [ {\cal X}_t \mbf\theta(t) + M_t {\cal X}_t \mbf\beta'(t) + {\cal X}_t \mbf\xi ] - \mbf\theta(t),
\end{equation}
where $|\mbf\xi|_\infty \le C_0 b_n^2 /2$ and $|\mbf\xi|_0 \le s$ almost surely, $t \in \varpi$. First, we bound the shrinkage bias of the tv-ridge estimator. By the argument in Section 3 of \cite{shaodeng2012}, we can show that
\begin{equation*}
({\cal X}_t^\top {\cal X}_t + \lambda_2 I_p)^{-1} {\cal X}_t^\top {\cal X}_t \mbf\theta(t) - \mbf\theta(t) = - Q (\lambda_2^{-1} D^2 + I_r)^{-1} Q^\top \mbf\theta(t).
\end{equation*}
It follows from Lemma \ref{lem:unweighted-2-weighted} that
\begin{eqnarray}
\label{eqn:tvridge-bias-shrinkage}
| Q (\lambda_2^{-1} D^2 + I_r)^{-1} Q^\top \mbf\theta(t) |_2 &\le& {|\mbf\theta(t)|_2 \over \rho_{\min}(\lambda_2^{-1} D^2 + I_r) } \\
\nonumber
&=& \left( {\lambda_2 \over \lambda_2 + \min_{j \le r} d_j^2} \right) |\mbf\theta(t)|_2 \le {\lambda_2 |\mbf\theta(t)|_2  \over \rho_{\min \neq 0}(\hat\Sigma_t)} \le {\lambda_2 |\mbf\theta(t)|_2  \over |N_t| \underline{w}_t \varepsilon_0^2},
\end{eqnarray}
where $d_j^2 = \rho_j(\hat\Sigma_t), j = 1, \cdots, r$. Next, we deal with the non-stationary bias of the tv-ridge estimator (\ref{eqn:bias_ridge_total}) by a similar argument for (\ref{eqn:tvridge-bias-shrinkage}). Indeed, let $Q_\perp$ be the orthogonal complement of $Q$ such that $Q_\perp^\top Q_\perp = I_{p-r}$ and $Q_\perp^\top Q = \vzero_{(p-r) \times r}$. Denote $\Gamma = [Q; Q_{\perp}]$. Then, $\Gamma \Gamma^\top = \Gamma^\top \Gamma = I_p$. By the SVD of ${\cal X}_t$, (\ref{eqn:svd_X}), we have
\begin{align*}
& ({\cal X}_t^\top {\cal X}_t + \lambda_2 I_p)^{-1} {\cal X}_t^\top M_t {\cal X}_t {\mbf\beta}'(t) = \Gamma \left( \Gamma^\top (Q D^2 Q^\top + \lambda_2 I_p) \Gamma \right)^{-1} \Gamma^\top {\cal X}_t^\top M_t {\cal X}_t {\mbf\beta}'(t) \\
& \qquad  = [Q; Q^\perp] \left( \left[\begin{array}{c} Q^\top \\ Q_{\perp}^\top \end{array} \right] (Q D^2 Q^\top + \lambda_2 I_p)  [Q; Q_\perp] \right)^{-1} \left[\begin{array}{c} Q^\top \\ Q_{\perp}^\top \end{array} \right]  Q D P^\top M_t {\cal X}_t {\mbf\beta}'(t) \\
& \qquad = [Q; Q^\perp] \left( \begin{array}{cc} (D^2 + \lambda_2 I_r)^{-1} & \vzero \\ \vzero & \lambda_2^{-1} I_{p-r}  \end{array} \right)  \left[\begin{array}{c} D P^\top M_t  {\cal X}_t {\mbf\beta}'(t)  \\ \vzero \end{array} \right] \\
& \qquad = Q ( D + \lambda_2 D^{-1} )^{-1} P^\top M_t  {\cal X}_t {\mbf\beta}'(t).
\end{align*} 
Hence, by Lemma \ref{lem:unweighted-2-weighted} we have
\begin{equation*}
|({\cal X}_t^\top {\cal X}_t + \lambda_2 I_p)^{-1} {\cal X}_t^\top M_t {\cal X}_t \mbf\beta'(t)|_2 \le {  b_n |{\cal X}_t \mbf\beta'(t)|_2 \over \rho_{\min}(D + \lambda_2 D^{-1}) } \le {C_0 b_n (s |N_t| \overline{w}_t)^{1/2} \varepsilon_0^{-1} \over \min_{j \le r} (d_j + \lambda_2 / d_j) },
\end{equation*}
where $\overline{w}_t = \sup_{i \in N_t} w(i,t)$. Since $\lambda_2 = o(1)$ and $d_j \ge (|N_t| \underline{w}_t)^{1/2} \varepsilon_0$, the denominator of last expression is lower bounded by $[ (|N_t| \underline{w}_t)^{1/2}\varepsilon_0 + \lambda_2 / ( (|N_t| \underline{w}_t)^{1/2} \varepsilon_0)]$ for large enough $n$. Therefore, we have
\begin{equation}
\label{eqn:tvridge-bias-nonstationary}
| ({\cal X}_t^\top {\cal X}_t + \lambda_2 I_p)^{-1} {\cal X}_t^\top M_t {\cal X}_t \mbf\beta'(t) |_2 \le {C_0 b_n (s |N_t| \overline{w}_t)^{1/2} \over  (|N_t| \underline{w}_t)^{1/2} \varepsilon_0^2 } \le {C_0 b_n  s^{1/2} \over C \varepsilon_0^2 }.
\end{equation}
Similarly, an upper bound for the remainder term of (\ref{eqn:bias_ridge_total}) can be established. We have
\begin{equation}
\label{eqn:tvridge-bias-nonstationary-remainder}
| ({\cal X}_t^\top {\cal X}_t + \lambda_2 I_p)^{-1} {\cal X}_t^\top M_t {\cal X}_t \mbf\xi |_2 \le {C_0 b_n^2 s^{1/2} \over 2 C \varepsilon_0^2}, \quad \text{for almost surely } t \in \varpi.
\end{equation}
In addition, $\tilde{\mbf\theta}(t) - \E[\tilde{\mbf\theta}(t)] = ({\cal X}_t^\top {\cal X}_t + \lambda_2 I_p)^{-1} {\cal X}_t^\top {\cal E}_t$ is the stochastic part of the tv-ridge estimator. Since the $e_i \sim N(0, \sigma^2 I_n)$ are iid, ${\cal E}_t \sim N(\vzero, \sigma^2 W_t)$. Hence, $\tilde{\mbf\theta}(t) - \E[\tilde{\mbf\theta}(t)] \sim N(\vzero, \Omega(\lambda_2))$, where $\Omega(\lambda)$ is defined in (\ref{eqn:Omega}), and thus
\begin{equation}
\Var(\tilde{\theta}_j(t)) = \Omega_{jj}(\lambda_2) \ge \Omega_{\min}(\lambda_2).
\end{equation}
Now, we consider the initial tv-lasso estimator. By Lemma \ref{lem:tv-lasso-rate},
\begin{equation}
\label{eqn:tvlasso-bias-proj}
|\tilde{\mbf\beta}(t) - {\mbf\beta}(t)|_1 \le 4 \phi_0^{-2} \lambda_1 s.
\end{equation}
Then, (\ref{eqn:decomposition}), (\ref{eqn:decomposition_stochastic}), and (\ref{eqn:decomposition_bias}) follow by assembling (\ref{eqn:tvridge-bias-shrinkage}), (\ref{eqn:tvridge-bias-nonstationary}), (\ref{eqn:tvridge-bias-nonstationary-remainder}), and (\ref{eqn:tvlasso-bias-proj}) into (\ref{eqn:proposed-estimator}),
\begin{equation*}
\hat{\mbf\beta}(t) = {\mbf\beta}(t) + \text{bias}(\tilde{\mbf\theta}(t)) + \{ \tilde{\mbf\theta}(t) - \E[\tilde{\mbf\theta}(t)] \} - \{(P_{{\cal R}_t} - I_p) [\tilde{\mbf\beta}(t) - \mbf\beta(t)]\}.
\end{equation*}
The marginal representation (\ref{eqn:decomposition_marginal}) and (\ref{eqn:gamma_marginal_beta_j=0}) follow from similar arguments by noting that $B_j(t) = \sum_{k \neq j} (P_{{\cal R}_t})_{jk} \beta_k(t)$ under $H_{0,j,t}$.
\end{proof}

\begin{proof}[Proof of Theorem \ref{thm:representation_GP}]
The proof of Theorem \ref{thm:representation_GP} is similar to that of Theorem \ref{thm:representation} so we only highlight the difference involving the error process. First, $\Cov({\cal E}_t)=W_t^{1/2}\Sigma_{e,t} W_t^{1/2}$. Second, instead of using (\ref{eqn:concentration_iid_gaussian}) in proving Lemma \ref{lem:tv-lasso-rate}, we use Lemma \ref{lem:concentration_GP} to get for all $\lambda>0$
\begin{equation*}
\Prob \left( \max_{j \le p} \left| \sum_{i \in N_t} w(i,t) X_{ij} e_i \right| \ge \lambda \right) \le 2 p \exp\left(-{\lambda^2 \over 2 L_{t,2}^2 |\va|_1^2 \sigma^2} \right) \quad \text{if } \varrho > 1,
\end{equation*}
\begin{equation*}
\Prob \left( \max_{j \le p} \left| \sum_{i \in N_t} w(i,t) X_{ij} e_i \right| \ge \lambda \right) \le 2 p \exp\left(-{C_\varrho \lambda^2 \over L_{t,2}^2 n^{2(1-\varrho)} \sigma^2 K^2} \right) \quad \text{if } \varrho \in (1/2,1).
\end{equation*}
\end{proof}

\begin{proof}[Proof of Theorem \ref{thm:non-gaussian}]
The proof essentially follows the lines of that of Theorem \ref{thm:representation}, but with differences in requiring a larger penalty parameter $\lambda_1$ of the tv-Lasso. First, by the Nagaev inequality \cite{nagaev1979a}, we have for any $\varepsilon > 0$,
\begin{equation*}
\Prob \left( \max_{j \le p} \left| \sum_{i \in N_t} w(i,t) X_{ij} e_i \right| \ge  \sigma L_{t,2} \varepsilon \right) \le (1+2/q)^q \kappa_q {p \mu_{n,q} \over (\sigma L_{t,2} \varepsilon)^q} + 2  p\exp\left(-{c_q \varepsilon^2}\right),
\end{equation*}
where $c_q =  2 e^{-q} (q+2)^{-2}$ and $\kappa_q$ is the $q$-th absolute moment of $e_1$. Then, choosing 
\begin{equation*}
\varepsilon = C_q \max \left\{ { (p \mu_{n,q})^{1/q} \over \sigma L_{t,2}}, \; (\log{p})^{1/2} \right\} \quad \text{for large enough  } C_q > 0,
\end{equation*}
we have $\max_{j \le p} | \sum_{i \in N_t} w(i,t) X_{ij} e_i | = O_\Prob(\lambda_0)$. Second, let $\Xi =  ({\cal X}_t^\top {\cal X}_t + \lambda_2 I_p)^{-1} {\cal X}_t^\top W_t^{1/2}$ and ${\cal E}_t^\diamond = (e_i)_{i \in N_t}^\top$. Recall that $\tilde{\mbf\theta}(t) - \E[\tilde{\mbf\theta}(t)] =\Xi {\cal E}_t^\diamond$. By the Gaussian approximation \cite[Theorem B]{shao1995}, there exist iid Gaussian random variables $g_i \sim N(0, \sigma^2 \xi_{ji}^2)$ defined on a possibly richer probability space such that for every $t > 0$
\begin{equation*}
\Prob \left( \left| \tilde{\theta}_j(t) - \E[\tilde{\theta}_j(t)] - \sum_{i \in N_t} g_i \right| \ge t \right) \le (C q)^q {\sum_{i \in N_t} \E|\xi_{ji} e_i|^q \over t^q}.
\end{equation*}
Thus, it follows that $\tilde{\theta}_j(t) - \E[\tilde{\theta}_j(t)]  = N(0, \Omega_{jj}(\lambda_2)) + O_{\Prob}( |\mbf\xi_j|_q )$. As $\Omega_{jj}(\lambda_2) = \sigma^2 |\mbf\xi_j|_2^2$, the proof is complete for by assumption, $|\mbf\xi_j|_q = o(|\mbf\xi_j|_2)$.
\end{proof}

\vskip 14pt
\noindent {\large\bf Supplementary Material}

The supplementary material contains additional technical lemmas and discusses some implementation issues.
\par
\vskip 14pt
\noindent {\large\bf Acknowledgements}

We would like to thank an anonymous referee, an Associate Editor, and the Co-Editor for their many helpful comments that led to improvements of this paper. This research was partially supported by NSF DMS-1404891 and UIUC Research Board Award RB15004. We thank Aiping Liu (University of British Columbia) for providing the Parkinson's Disease data. The high-performance computing of this work was completed on the Illinois Campus Cluster Program at UIUC.\par


\bibliographystyle{plain}
\bibliography{tvcm}

\begin{thebibliography}{10}

\bibitem{beckmanjenkinsonsmith2003}
Christian~F. Beckman, Mark Jenkinson, and Stephen~M. Smith.
\newblock General multilevel linear modeling for group analysis in fmri.
\newblock {\em NeuroImage}, 20:1052--1063, 2003.

\bibitem{bickelritovtsybakov2009}
Peter Bickel, Ya'acov Ritov, and Alexandre Tsybakov.
\newblock Simultaneous analysis of lasso and dantzig selector.
\newblock {\em Annals of Statistics}, 37(4):1705--1732, 2009.

\bibitem{bickellevina2008b}
Peter~J. Bickel and Elizaveta Levina.
\newblock {Regularized Estimation of Large Covariance Matrices}.
\newblock {\em The Annals of Statistics}, 36(1):199--227, 2008.

\bibitem{blm2013}
St\'ephane Boucheron, G\'abor Lugosi, and Pascal Massart.
\newblock {\em Concentration Inequalities: A Nonasymptotic Theory of
  Independence}.
\newblock Oxford, 2013.

\bibitem{buhlmann2013}
Peter B\"uhlmann.
\newblock {Statistical significance in high-dimensional linear models}.
\newblock {\em Bernoulli}, 19(4):1212--1242, 2013.

\bibitem{buhlmannvandegeer2011}
Peter B\"uhlmann and Sara van~de Geer.
\newblock {\em {Statistics for High-Dimensional Data: Methods, Theory and
  Applications}}.
\newblock Springer Series in Statistics, 2011.

\bibitem{caizhangzhou2010a}
Tony Cai, Cun-Hui Zhang, and Harrison Zhou.
\newblock {Optimal Rates of Convergence for Covariance Matrix Estimation}.
\newblock {\em The Annals of Statistics}, 38(4):2118--2144, 2010.

\bibitem{cai_zw2007}
Zongwu Cai.
\newblock Trending time-varying coefficient time series models with serially
  correlated errors.
\newblock {\em Journal of Econometrics}, 136(1):163--188, 2007.

\bibitem{changglover2010a}
Catie Chang and Gary~H. Glover.
\newblock {Time-frequency dynamics of resting-state brain connectivity measured
  with fMRI}.
\newblock {\em NeuroImage}, 50:81--98, 2010.

\bibitem{chenxuwu2013}
Xiaohui Chen, Mengyu Xu, and Wei~Biao Wu.
\newblock {Covariance and precision matrix estimation for high-dimensional time
  series}.
\newblock {\em Annals of Statistics}, 41(6):2994--3021, 2013.

\bibitem{chenxuwu2016}
Xiaohui Chen, Mengyu Xu, and Wei~Biao Wu.
\newblock {Regularized estimation of linear functionals of precision matrices
  for high-dimensional time series}.
\newblock {\em IEEE Transactions on Signal Processing}, 64(24):6459--6470,
  2016.

\bibitem{clevelandgrosseshyu1991}
W.~S. Cleveland, E.~Grosse, and W.~M. Shyu.
\newblock {Local regression models.}
\newblock {\em In Statistical Models in S (Chambers, J. M. and Hastie, T. J.,
  eds) Wadsworth \& Brooks, Pacific Grove.}, pages 309--376, 1991.

\bibitem{fanha2016}
Jianqing Fan and Xu~Han.
\newblock Estimation of the false discovery proportion with unknown dependence.
\newblock {\em Journal of the Royal Statistical Society: Series B (Statistical
  Methodology)}, 2016.

\bibitem{fanhangu2012}
Jianqing Fan, Xu~Han, and Weijie Gu.
\newblock Estimating false discovery proportion under arbitrary covariance
  dependence.
\newblock {\em Journal of American Statistical Association},
  107(499):1019--1035, 2012.

\bibitem{fanzhang1999}
Jianqing Fan and Wenyang Zhang.
\newblock Statistical estimation in varying coefficient models.
\newblock {\em Annals of Statistics}, 27(5):1491--1518, 1999.

\bibitem{hooveretal1998}
Donald~R. Hoover, John~A. Rice, Colin~O Wu, and Li-Ping Yang.
\newblock Nonparametric smoothing estimates of time-varying coefficient models
  with longitudinal data.
\newblock {\em Biometrika}, 85(4):809--822, 1998.

\bibitem{hutchison2010a}
M.~Hutchison, Mirsattari S., C.~Jones, J.~Gati, and L.~Leugn.
\newblock {Functional networks in the anesthetized rat brain revealed by
  independent component analysis of resting-state fMRI}.
\newblock {\em J. Neurophysiol}, 103:3398--3406, 2010.

\bibitem{hutchison2013a}
M.~Hutchison, T.~Womelsdorf, J.~Gati, S.~Everling, and R.~Menon.
\newblock {Resting-state networks show dynamic functional connectivity in awake
  humans and anesthetized macaques}.
\newblock {\em Human Brain Mapping}, 34:2154--2177, 2013.

\bibitem{javanmardmontanari2014a}
Adel Javanmard and Andrea Montanari.
\newblock Confidence inter hypothesis testing in high-dimensional regression
  under the gaussian random design model: Asymptotic theory.
\newblock {\em IEEE Transactions on Information Theory}, 60(10):6522--6554,
  2014.

\bibitem{kiviniemi2005}
V.~Kiviniemi, H.~Haanp{\"a}{\"a}, J-H Kantola, Jauhiainen J.,
  Vainionp{\"a}{\"a} V., S.~Alahuhta, and O.~Tervonen.
\newblock {Midazolam sedation increases fluctuation and synchrony of the
  resting brain BOLD signal}.
\newblock {\em Magn Reson Imaging}, 23:531--537, 2005.

\bibitem{lindquist2008}
Martin Lindquist.
\newblock The statistical analysis of fmri data.
\newblock {\em Statistical Science}, 23(4):439--464, 2008.

\bibitem{mcmurrypolitis2010}
T.~L. McMurry and D.~N. Politis.
\newblock {Banded and tapered estimates for autocovariance matrices and the
  linear process bootstrap.}
\newblock {\em J. Time Ser. Anal.}, 31:471--482, 2010.

\bibitem{meinshausenbuhlmann2006}
Nicolai Meinshausen and Peter B\"uhlmann.
\newblock {High-dimensional graphs and variable selection with the Lasso}.
\newblock {\em Annals of Statistics}, 34(3):1049--1579, 2006.

\bibitem{nagaev1979a}
S.V. Nagaev.
\newblock {Large deviations of sums of independent random variables}.
\newblock {\em Annals of Probability}, 7(5):745--789, 1979.

\bibitem{orbeferreirarodriguez-poo2005}
Susan Orbe, Eva Ferreira, and Juan Rodriguez-Poo.
\newblock Nonparametric estimation of time varying parameters under shape
  restrictions.
\newblock {\em Journal of Econometrics}, 126(1):53--77, 2005.

\bibitem{resnick1987a}
Sidney~I. Resnick.
\newblock {\em Extreme Values, Regular Variation, and Point Processes}.
\newblock Applied Probability. Springer-Verlag, 1987.

\bibitem{robinson1989}
Peter~M. Robinson.
\newblock {\em Nonparametric estimation of time-varying parameters}.
\newblock { In Statistical Analysis and Forecasting of Economic Structural
  Change (P. Hackl, ed.) Berlin: Springer.}, 1989.

\bibitem{shaodeng2012}
Jun Shao and Xinwei Deng.
\newblock {Estimation in high-dimensional linear models with deterministic
  design matrices}.
\newblock {\em Annals of Statistics}, 40(2):812--831, 2012.

\bibitem{shao1995}
Qi-Man Shao.
\newblock Strong approximation theorems for independent random variables and
  their applications.
\newblock {\em Journal of Multivariate Analysis}, 52:107--130, 1995.

\bibitem{skup2010}
Martha Skup.
\newblock Longitudinal fmri analysis: a review of methods.
\newblock {\em Statistics and Its Interface}, 3(2):235--252, 2010.

\bibitem{ruiyizou2014}
Rui Song, Feng Yi, and Hui Zou.
\newblock On varying-coefficient independence screening for high-dimensional
  varying-coefficient models.
\newblock {\em Statistica Sinica}, 24(4):1735--1752, 2014.

\bibitem{sunzhang2012}
Tingni Sun and Chun-Hui Zhang.
\newblock {Scaled sparse linear regression}.
\newblock {\em Biometrika}, 99:879--898, 2012.

\bibitem{vandegeerbuhlmann2009}
Sara van~de Geer and Peter B\"uhlmann.
\newblock {On the conditions used to prove oracle results for the Lasso}.
\newblock {\em Electronic Journal of Statistics}, 3:1360--1392, 2009.

\bibitem{wangzhang1992}
Boying Wang and Fuzhen Zhang.
\newblock {Some inequalities for the eigenvalues of the product of positive
  semidefinite Hermitian matrices}.
\newblock {\em Linear Algebra and Its Applications}, 160:113--118, 1992.

\bibitem{weihuangli2011}
Fengrong Wei, Jian Huang, and Hongzhe Li.
\newblock Variable selection and estimation in high-dimensional
  varying-coefficient models.
\newblock {\em Statistica Sinica}, 21:1515--1540, 2011.

\bibitem{xuequ2012}
Lan Xue and Annie Qu.
\newblock Variable selection in high-dimensional varying coefficient models
  with global optimality.
\newblock {\em Journal of Machine Learning Research}, 13:1973--1998, 2012.

\bibitem{zhangzhang2013}
Cun-Hui Zhang and Stephanie~S. Zhang.
\newblock Confidence intervals for low dimensional parameters in high
  dimensional linear models.
\newblock {\em Journal of the Royal Statistical Society: Series B (Statistical
  Methodology)}, 71(1):217--242, 2013.

\bibitem{zhangwu2012a}
Ting Zhang and Wei~Biao Wu.
\newblock Inference of time-varying regression models.
\newblock {\em Annals of Statistics}, 40(3):1376--1402, 2012.

\bibitem{zhangleesong2002}
Wenyang Zhang, Sik-Yum Lee, and Xinyuan Song.
\newblock Local polynomial fitting in semivarying coefficient model.
\newblock {\em Journal of Multivariate Analysis}, 82:166--188, 2002.

\bibitem{zhoulaffertywasserman2010}
Shuheng Zhou, John Lafferty, and Larry Wasserman.
\newblock {Time-varying undirected graphs}.
\newblock {\em Machine Learning}, 80(2-3):295--319, 2010.

\bibitem{zhouwu2010}
Zhou Zhou and Wei~Biao Wu.
\newblock {Simultaneous inference of linear models with time varying
  coefficients}.
\newblock {\em Journal of the Royal Statistical Society}, 72(4):513--531, 2010.

\end{thebibliography}

%
%

\vskip .65cm
\noindent
Department of Statistics, 725 S Wright Street, Champaign, IL USA 61820
\vskip 2pt
\noindent
E-mail: xhchen@illinois.edu, he3@illinois.edu
\vskip 2pt


\section*{Supplementary Material}
\label{sec:proofs}

\subsection*{Additional technical lemmas}

\begin{lem}
\label{lem:diag_weighted_spectral_bound}
Let $X$ be an $n\times p$ matrix and $D = \diag(d_1,\cdots,d_n)$ with $|d_i| \le b$ and $b \ge 0$. Then
$$
\rho_{\max}(X^\top DX, s) \le 2 b \rho_{\max}(X^\top X, s).
$$
If $d_i \in [0,b]$, then $\rho_{\max}(X^\top DX, s) \le b \rho_{\max}(X^\top X, s).$
\end{lem}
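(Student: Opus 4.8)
The plan is to reduce the restricted maximum eigenvalue, which is by definition a Rayleigh quotient maximized over $s$-sparse vectors, to a weighted sum of squares on which the entrywise bound $|d_i| \le b$ can be applied directly. The whole argument hinges on a single change of variables that decouples the diagonal weighting from the Gram structure of $X$.

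Concretely, I would fix an arbitrary $\vb \in \mathbb{R}^p$ with $|\vb|_0 \le s$ and $\vb \ne \vzero$ and set $\vy = X\vb$. Then the numerator becomes $\vb^\top X^\top D X \vb = \vy^\top D \vy = \sum_{i=1}^n d_i y_i^2$, while $\vb^\top X^\top X \vb = |\vy|_2^2 = \sum_{i=1}^n y_i^2$. The support of $\vb$ is untouched by this substitution, so the natural comparison is against $\rho_{\max}(X^\top X, s)$ at the \emph{same} sparsity level $s$; this is the only place where the restricted (rather than unrestricted) nature of the eigenvalue enters.

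For the nonnegative case $d_i \in [0,b]$ I would bound $d_i y_i^2 \le b\, y_i^2$ termwise, giving $\vb^\top X^\top D X \vb \le b\, \vb^\top X^\top X \vb \le b\, \rho_{\max}(X^\top X, s)\, |\vb|_2^2$; dividing by $|\vb|_2^2$ and taking the supremum over $s$-sparse $\vb$ yields $\rho_{\max}(X^\top D X, s) \le b\, \rho_{\max}(X^\top X, s)$. For the general case $|d_i| \le b$ the only new feature is the sign of $d_i$, which I would resolve by splitting $D = D^+ - D^-$ with $D^+_{ii} = \max(d_i,0)$ and $D^-_{ii} = \max(-d_i,0)$, both diagonal with entries in $[0,b]$. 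Writing $X^\top D X = X^\top D^+ X + (-X^\top D^- X)$ and using the elementary subadditivity $\rho_{\max}(A+B, s) \le \rho_{\max}(A,s) + \rho_{\max}(B,s)$ (immediate from applying the Rayleigh-quotient definition to a common vector), I would bound each summand by $b\, \rho_{\max}(X^\top X, s)$ via the nonnegative case, producing the stated factor $2b$.

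Since every step is elementary, I do not expect a genuine obstacle; the only point deserving care is the sign issue in the general case, where one should not silently treat $\rho_{\max}(X^\top D X, s)$ as a two-sided spectral quantity. I would also remark that the negative part $-X^\top D^- X$ is negative semidefinite, so its restricted maximum eigenvalue is nonpositive; consequently the sharper constant $b$ in fact holds even in the general case, and the looser $2b$ in the statement follows immediately from $\rho_{\max}(X^\top X, s) \ge 0$ and suffices for the applications in which the lemma is invoked.
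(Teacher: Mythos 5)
Your proof is correct and follows essentially the same route as the paper's: both reduce the quadratic form to $\sum_i d_i (X\vb)_i^2$ (the paper via the trace identity $\tr(D\,X\vb\vb^\top X^\top)$, you via $\vy=X\vb$) and both handle signs through the decomposition $d_i=d_i^+-d_i^-$. Your closing observation that the negative part contributes nonpositively under the paper's (one-sided Rayleigh-quotient) definition of $\rho_{\max}(\cdot,s)$, so the sharper constant $b$ actually holds in general, is also valid; the paper's factor $2b$ is simply a loose but harmless bound.
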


\begin{proof}
Let ${\cal A}_s = \{\va \in \mathbb{R}^p : |\va|_2 \le 1, |\va|_0 \le s\}$. Write $d_i=d_i^+-d_i^-$, where $d_i^+ = \max(d_i,0)$ and $d_i^-=\max(-d_i,0)$ are the positive and negative parts, respectively. By definition
\begin{eqnarray*}
\rho_{\max}(X^\top D X, s) &=& \max_{\va \in {\cal A}_s} |\va^\top X^\top D X \va| = \max_{\va \in {\cal A}_s} |\tr(D (X\va \va^\top X^\top))| \\
&=&  \max_{\va \in {\cal A}_s} \left| \sum_{i=1}^n (d_i^+ - d_i^-) (X\va \va^\top X^\top)_{ii} \right| \le 2 b \max_{\va \in {\cal A}_s} \sum_{i=1}^n (X\va \va^\top X^\top)_{ii} \\
&=& 2 b \max_{\va \in {\cal A}_s} \tr(X\va \va^\top X^\top) = 2 b \max_{\va \in {\cal A}_s} \va^\top X^\top X \va = 2 b \rho_{\max}(X^\top X, s),
\end{eqnarray*}
because $X^\top X$ is nonnegative definite. The second claim follows from the same lines with $d_i^-=0$.
\end{proof}

\begin{lem}
\label{lem:unweighted-2-weighted}
Let $t \in \varpi$ and $\hat\Sigma_t$ be the kernel smoothed sample covariance at time $t$ and $\hat\Sigma_t^\diamond = {{\cal X}_t^\diamond}^\top {{\cal X}_t^\diamond}$. Suppose that ${\cal X}_t^\diamond$ has full row rank. Assume further (\ref{eqn:restricted_min_eigenvalue_Gram}), (\ref{eqn:max_eigenvalue_Gram}) and assumption 6 hold, then we have
\begin{eqnarray}
\label{eqn:unweighted-2-weighted}
\rho_{\min \neq 0}(\hat\Sigma_t) &\ge& |N_t| {\underline{w}_t \varepsilon_0^2}, \\
\label{eqn:unweighted-2-weighted_restricted_upper_bound}
\rho_{\max}(\hat\Sigma_t,s) &\le& |N_t| {\overline{w}_t \varepsilon_0^{-2}}.
\end{eqnarray}
\end{lem}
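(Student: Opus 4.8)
The plan is to reduce both bounds to the restricted eigenvalue bounds on the \emph{unweighted} local Gram matrix $\hat\Sigma_t^\diamond$ that are assumed in (\ref{eqn:max_eigenvalue_Gram}) and (\ref{eqn:restricted_min_eigenvalue_Gram}). First I would record the algebraic relation between the weighted and unweighted designs. Writing $X_t$ for the raw local design whose rows are $\vx_i^\top$, $i \in N_t$, and $W_t = \diag(w(i,t)_{i \in N_t})$, the definitions give ${\cal X}_t = W_t^{1/2} X_t$, so that $\hat\Sigma_t = {\cal X}_t^\top {\cal X}_t = X_t^\top W_t X_t$, while $X_t^\top X_t = |N_t| \hat\Sigma_t^\diamond$ since $\hat\Sigma_t^\diamond = |N_t|^{-1} X_t^\top X_t$. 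Thus each spectral quantity of $\hat\Sigma_t$ is that of a diagonally reweighted copy of $|N_t| \hat\Sigma_t^\diamond$, and the task is to pass the weights in and out of the eigenvalue bounds.

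For the restricted maximum eigenvalue (\ref{eqn:unweighted-2-weighted_restricted_upper_bound}) I would invoke the second claim of Lemma \ref{lem:diag_weighted_spectral_bound} directly: the diagonal $W_t$ has entries in $[0, \overline{w}_t]$, so $\rho_{\max}(\hat\Sigma_t, s) = \rho_{\max}(X_t^\top W_t X_t, s) \le \overline{w}_t \rho_{\max}(X_t^\top X_t, s)$. Using the homogeneity $\rho_{\max}(cM, s) = c\,\rho_{\max}(M, s)$ together with $X_t^\top X_t = |N_t| \hat\Sigma_t^\diamond$, the right side becomes $\overline{w}_t |N_t| \rho_{\max}(\hat\Sigma_t^\diamond, s)$, and (\ref{eqn:max_eigenvalue_Gram}) closes the bound at $|N_t| \overline{w}_t \varepsilon_0^{-2}$.

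For the minimum nonzero eigenvalue (\ref{eqn:unweighted-2-weighted}) I would argue by the Loewner order. Since $W_t \succeq \underline{w}_t I_{|N_t|}$, we have $\hat\Sigma_t = X_t^\top W_t X_t \succeq \underline{w}_t X_t^\top X_t$, hence $\vv^\top \hat\Sigma_t \vv \ge \underline{w}_t \vv^\top X_t^\top X_t \vv$ for every $\vv$. The minimum nonzero eigenvalue of a nonnegative-definite matrix equals the minimum Rayleigh quotient restricted to its range, so I would minimize both sides over unit vectors in ${\cal R}_t = \text{span}(\vx_i : i \in N_t)$, obtaining $\rho_{\min \neq 0}(\hat\Sigma_t) \ge \underline{w}_t \rho_{\min \neq 0}(X_t^\top X_t) = \underline{w}_t |N_t| \rho_{\min \neq 0}(\hat\Sigma_t^\diamond) \ge |N_t| \underline{w}_t \varepsilon_0^2$ by (\ref{eqn:restricted_min_eigenvalue_Gram}).

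The main thing to handle carefully is this last range argument, namely that minimizing over ${\cal R}_t$ genuinely computes the smallest \emph{nonzero} eigenvalue for \emph{both} $\hat\Sigma_t$ and $X_t^\top X_t$. This is where the full row rank hypothesis on ${\cal X}_t^\diamond$ (equivalently on $X_t$) and the positivity of the weights enter: when $\underline{w}_t > 0$ the diagonal $W_t$ is invertible, so $\text{range}(X_t^\top W_t X_t) = \text{range}(X_t^\top) = \text{range}(X_t^\top X_t) = {\cal R}_t$, the two matrices share the common range ${\cal R}_t$ of dimension $r = |N_t|$, and the restricted minimization is legitimate. In the degenerate case $\underline{w}_t = 0$ the asserted lower bound is vacuously true, so no separate treatment is needed.
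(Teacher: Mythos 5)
Your proposal is correct. The upper bound (\ref{eqn:unweighted-2-weighted_restricted_upper_bound}) is handled exactly as in the paper: both apply the second claim of Lemma \ref{lem:diag_weighted_spectral_bound} to $\hat\Sigma_t = |N_t|\, {{\cal X}_t^\diamond}^\top W_t {\cal X}_t^\diamond$ with the weights in $[0,\overline{w}_t]$ and then invoke (\ref{eqn:max_eigenvalue_Gram}). For the lower bound (\ref{eqn:unweighted-2-weighted}), however, you take a genuinely different route. The paper first converts the minimum nonzero eigenvalue of the $p\times p$ column Gram matrix into a true minimum eigenvalue of the $|N_t|\times|N_t|$ row Gram matrix, $\rho_{\min\neq 0}({\cal X}_t^\top{\cal X}_t)=\rho_{\min}({\cal X}_t{\cal X}_t^\top) = |N_t|\,\rho_{\min}(\calX_t^\diamond {\calX_t^\diamond}^\top W_t)$, and then splits off the weights with the generalized Marshall--Olkin inequality $\rho_{\min}(AB)\ge\rho_{\min}(A)\rho_{\min}(B)$; the full-row-rank hypothesis is what makes ${\cal X}_t{\cal X}_t^\top$ nonsingular so that no ``nonzero'' qualifier is needed on that side. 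You instead stay with the $p\times p$ matrices, use the Loewner bound $X_t^\top W_t X_t \succeq \underline{w}_t X_t^\top X_t$, and minimize the Rayleigh quotient over the common range ${\cal R}_t$; the burden then shifts to verifying that $\mathrm{range}(X_t^\top W_t X_t)=\mathrm{range}(X_t^\top X_t)={\cal R}_t$, which you do correctly via invertibility of $W_t$ when $\underline{w}_t>0$ and by noting the bound is vacuous when $\underline{w}_t=0$. Your argument is more elementary (no external matrix inequality cited) at the cost of the explicit range bookkeeping, and it in fact isolates where the full-row-rank and positive-weight hypotheses are actually used; the paper's version is shorter because passing to ${\cal X}_t{\cal X}_t^\top$ makes every eigenvalue involved a genuine minimum eigenvalue. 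Both yield the identical bounds.
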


\begin{proof}
Since ${\cal X}_t^\diamond$ is of full row rank, $r = |N_t|$. Note that ${\cal X}_t = (|N_t| W_t)^{1/2} {\cal X}_t^\diamond$, $\rho_i(\hat\Sigma_t) = \sigma_i^2({\cal X}_t)$ and $\rho_i(\hat\Sigma_t^\diamond) = \sigma_i^2({\cal X}_t^\diamond)$. By the generalized Marshall-Olkin inequality, see e.g. \cite[Theorem 4]{wangzhang1992}, assumption 6 and (\ref{eqn:restricted_min_eigenvalue_Gram}), we have
\begin{eqnarray*}
\rho_{\min\neq0}(\hat\Sigma_t) &=& \rho_{\min}(\calX_t \calX_t^\top) = |N_t| \rho_{\min}(W_t^{1/2} \calX_t^\diamond {\calX_t^\diamond}^\top W_t^{1/2}) \\
&=& |N_t| \rho_{\min}(\calX_t^\diamond {\calX_t^\diamond}^\top W_t) \ge |N_t| \rho_{\min}(W_t) \rho_{\min}(\calX_t^\diamond {\calX_t^\diamond}^\top) \ge |N_t| {\underline{w}_t \varepsilon_0^2}.
\end{eqnarray*}
The second inequality (\ref{eqn:unweighted-2-weighted_restricted_upper_bound}) follows from assumption 3(b) and Lemma \ref{lem:diag_weighted_spectral_bound} applying to $\hat\Sigma_t = |N_t| {{\cal X}_t^\diamond}^\top W_t {{\cal X}_t^\diamond}$ and $W_t \ge 0$.
\end{proof}

\begin{lem}
\label{lem:tv-lasso-rate}
Suppose assumption 1, 2, 3 and 5(a) hold. Let $t \in \varpi$ be fixed and $\lambda_0$ be defined in (\ref{eqn:lambda_0+L_t}).
Then, for $\lambda_1 \ge 2(\lambda_0 + 2 C_0 L_{t,1} s^{1/2} \varepsilon_0^{-1} b_n |N_t| \overline{w}_t)$ where $\lambda_0$ is defined in (\ref{eqn:lambda_0+L_t}), we have, with probability $1-2p^{-1}$,
\begin{equation}
\label{eqn:tv-lasso-rate}
|{\cal X}_t [\tilde{\mbf\beta}(t) - {\mbf\beta}(t)]|_2^2 + \lambda_1  |\tilde{\mbf\beta}(t) - {\mbf\beta}(t)|_1 \le 4 \lambda_1^2 {s \over \phi_0^2}.
\end{equation}
\end{lem}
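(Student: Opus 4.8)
The plan is to establish \eqref{eqn:tv-lasso-rate} as a standard Lasso oracle inequality, the only nonstandard ingredient being that the ``noise'' driving the local least-squares problem is the sum of a genuine stochastic term and a deterministic non-stationarity bias, both of which must be controlled in the $\ell_\infty$ norm. Write $\delta = \tilde{\mbf\beta}(t) - \mbf\beta(t)$ and let $S = S^*_t$ with $|S| \le s$. Since $y_i = \vx_i^\top\mbf\beta(t_i) + e_i$ and $\mbf\theta(t)$ lies in the row space so that ${\cal X}_t\mbf\beta(t) = {\cal X}_t\mbf\theta(t)$, a Taylor expansion of $\mbf\beta(t_i)$ around $t$ gives ${\cal Y}_t = {\cal X}_t\mbf\beta(t) + r$ with residual $r = {\cal E}_t + M_t{\cal X}_t\mbf\beta'(t) + {\cal X}_t\mbf\xi$, where $M_t = \diag((t_i-t)_{i\in N_t})$, $|\mbf\xi|_\infty \le C_0 b_n^2/2$ and $|\mbf\xi|_0 \le s$. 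The starting point is the basic inequality obtained from the optimality of $\tilde{\mbf\beta}(t)$ in \eqref{eqn:tv-lasso}, namely $|{\cal X}_t\delta|_2^2 \le 2 r^\top{\cal X}_t\delta + \lambda_1(|\mbf\beta(t)|_1 - |\tilde{\mbf\beta}(t)|_1)$.

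First I would control the empirical-process term $2r^\top{\cal X}_t\delta \le 2|{\cal X}_t^\top r|_\infty |\delta|_1$. The $j$-th coordinate of ${\cal X}_t^\top{\cal E}_t$ is $\sum_{i\in N_t} w(i,t)X_{ij}e_i$, which under Assumption~1 is $N(0,\sigma^2\sum_i w(i,t)^2X_{ij}^2)$ with variance at most $\sigma^2 L_{t,2}^2$; a Gaussian maximal inequality and a union bound over $j\le p$ then give $2|{\cal X}_t^\top{\cal E}_t|_\infty \le \lambda_0 = 4\sigma L_{t,2}\sqrt{\log p}$ on an event ${\cal T}$ of probability at least $1-2p^{-1}$. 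For the deterministic non-stationarity term I would bound the $j$-th coordinate of ${\cal X}_t^\top M_t{\cal X}_t\mbf\beta'(t)$ by $b_n\sum_i w(i,t)|X_{ij}|\,|\vx_i^\top\mbf\beta'(t)|$ and apply Cauchy--Schwarz to split it into $b_n(\sum_i w(i,t)X_{ij}^2)^{1/2}|{\cal X}_t\mbf\beta'(t)|_2 \le b_n L_{t,1}|{\cal X}_t\mbf\beta'(t)|_2$ (note the weight exponent $\ell=1$ here). Using that $\mbf\beta'(t)$ is $s$-sparse with $|\mbf\beta'(t)|_\infty \le C_0$ together with the restricted upper eigenvalue bound \eqref{eqn:unweighted-2-weighted_restricted_upper_bound} of Lemma~\ref{lem:unweighted-2-weighted} yields $|{\cal X}_t\mbf\beta'(t)|_2 \le C_0 s^{1/2}(|N_t|\overline{w}_t)^{1/2}\varepsilon_0^{-1}$, so this bias is at most $C_0 L_{t,1}s^{1/2}(|N_t|\overline{w}_t)^{1/2}\varepsilon_0^{-1}b_n$; the second-order remainder ${\cal X}_t^\top{\cal X}_t\mbf\xi = \hat\Sigma_t\mbf\xi$ is of smaller order $O(b_n^2)$ and is absorbed. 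Combining, the threshold imposed on $\lambda_1$ in the statement guarantees $2|{\cal X}_t^\top r|_\infty \le \lambda_1/2$ on ${\cal T}$.

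On ${\cal T}$ I would then run the classical cone argument. Using $2r^\top{\cal X}_t\delta \le (\lambda_1/2)|\delta|_1$ and $|\mbf\beta(t)|_1 - |\tilde{\mbf\beta}(t)|_1 \le |\delta_S|_1 - |\delta_{S^c}|_1$ (valid since $\mbf\beta(t)$ is supported on $S$), the basic inequality rearranges to
\begin{equation*}
|{\cal X}_t\delta|_2^2 + (\lambda_1/2)|\delta_{S^c}|_1 \le (3\lambda_1/2)|\delta_S|_1 .
\end{equation*}
Dropping the quadratic term gives the cone constraint $|\delta_{S^c}|_1 \le 3|\delta_S|_1$, which places $\delta$ in the feasible set of the restricted eigenvalue condition \eqref{eqn:comptability_condtion}, whence $|\delta_S|_1 \le s^{1/2}|{\cal X}_t\delta|_2/\phi_0$ (recall $\hat\Sigma_t = {\cal X}_t^\top{\cal X}_t$). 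Multiplying the displayed inequality by $2$ and adding $\lambda_1|\delta_S|_1$ regroups it as $2|{\cal X}_t\delta|_2^2 + \lambda_1|\delta|_1 \le 4\lambda_1|\delta_S|_1$; inserting the compatibility bound and finishing with the elementary inequality $4\lambda_1 s^{1/2}|{\cal X}_t\delta|_2/\phi_0 \le |{\cal X}_t\delta|_2^2 + 4\lambda_1^2 s/\phi_0^2$ yields exactly $|{\cal X}_t\delta|_2^2 + \lambda_1|\delta|_1 \le 4\lambda_1^2 s/\phi_0^2$, the claimed \eqref{eqn:tv-lasso-rate}.

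I expect the main obstacle to be the deterministic non-stationarity bias rather than the (routine) stochastic concentration or the cone argument: one must verify that projecting $\mbf\beta(t_i)$ onto $\mbf\beta(t)$ produces a first-order term controllable in $\ell_\infty$ at the rate $b_n$, and that the sparsity of $\mbf\beta'(t)$ needed to invoke \eqref{eqn:unweighted-2-weighted_restricted_upper_bound} is legitimate. The latter is where Assumption~2 and the coordinatewise smoothness Assumption~3(a) enter, through the fact that at most $s$ of the derivatives $\beta_j'(t)$ are non-vanishing. Getting the interplay of $L_{t,1}$ (bias, $\ell=1$) versus $L_{t,2}$ (variance, $\ell=2$) right and checking that the second-order remainder is genuinely negligible are the points requiring care.
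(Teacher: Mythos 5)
Your proposal is correct and follows essentially the same route as the paper's proof: the same Taylor-expansion decomposition of ${\cal Y}_t - {\cal X}_t\mbf\beta(t)$ into ${\cal E}_t + M_t{\cal X}_t\mbf\beta'(t) + {\cal X}_t\mbf\xi$, the same Gaussian tail plus union bound giving the event of probability $1-2p^{-1}$ with $\lambda_0 = 4\sigma L_{t,2}\sqrt{\log p}$, the same Cauchy--Schwarz control of the non-stationarity bias via $L_{t,1}$ and the restricted eigenvalue upper bound of Lemma~\ref{lem:unweighted-2-weighted}, and the same cone argument with $4ab \le a^2 + 4b^2$ at the end. The only differences are cosmetic (you factor $b_n$ out of $M_t$ before applying the spectral bound, where the paper invokes Lemma~\ref{lem:diag_weighted_spectral_bound} on ${\cal X}_t^\top M_t^2{\cal X}_t$), and your threshold $(|N_t|\overline{w}_t)^{1/2}$ actually matches the paper's proof and Theorem~\ref{thm:representation} rather than the slightly inconsistent exponent in the lemma's own statement.
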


\begin{proof}
By definition (\ref{eqn:tv-lasso}),
\begin{equation*}
|{\cal Y}_t - {\cal X}_t \tilde{\mbf\beta}(t)|_2^2 + \lambda_1 |\tilde{\mbf\beta}(t)|_1 \le |{\cal Y}_t - {\cal X}_t {\mbf\beta}(t)|_2^2 + \lambda_1 |{\mbf\beta}(t)|_1,
\end{equation*}
which implies that
\begin{equation*}
|{\cal X}_t [\tilde{\mbf\beta}(t) - {\mbf\beta}(t)]|_2^2 + \lambda_1 |\tilde{\mbf\beta}(t)|_1 \le \lambda_1 |{\mbf\beta}(t)|_1 + 2 \left\langle {\cal Y}_t - {\cal X}_t {\mbf\beta}(t), {\cal X}_t [\tilde{\mbf\beta}(t) - {\mbf\beta}(t)] \right \rangle.
\end{equation*}
By assumption 2 and Taylor's expansion in the $b_n$-neighborhood of $t$, we see that 
\begin{equation}
\label{eqn:taylor}
{\cal Y}_t - {\cal X}_t {\mbf\beta}(t) = {\cal E}_t + M_t {\cal X}_t \mbf\beta'(t) + {\cal X}_t \mbf\xi,
\end{equation}
where $M_t = \diag((t_i-t)_{i \in N_t})$ and $\mbf\xi$ is a vector such that $|\mbf\xi|_\infty \le C_0 b_n^2 / 2$ and $|\mbf\xi|_0 \le s$. Let ${\cal J} = \{2 |{\cal E}_t^\top {\cal X}_t|_\infty  \le \lambda_0\}$. Observe that $|{\cal E}_t^\top {\cal X}_t|_\infty = \max_{j \le p} |\sum_{i \in N_t} w(i,t) X_{ij} e_i|$ and, by assumption 1,
\begin{equation}
\label{eqn:concentration_iid_gaussian}
\sum_{i \in N_t} w(i,t) X_{ij} e_i \sim N\left( 0, \sigma^2 \sum_{i \in N_t} w(i,t)^2 X_{ij}^2 \right).
\end{equation}
Then, by the standard Gaussian tail bound and the union bound, we obtain that
\begin{equation*}
\Prob \left( \max_{j \le p} \left| {\sum_{i \in N_t} w(i,t) X_{ij} e_i \over \sigma L_{t,2}} \right| \ge \sqrt{\varepsilon^2 + 2 \log{p}} \right) \le \Prob(\max_{j \le p} |Z_j| \ge \sqrt{\varepsilon^2 + 2 \log{p}}) \le 2 \exp\left(-{\varepsilon^2 \over 2}\right)
\end{equation*}
for all $\varepsilon > 0$, where $Z_j \sim N(0,1)$. Now, choose $\varepsilon = (2\log{p})^{1/2}$ and $\lambda_0 = 4 \sigma L_{t,2} (\log{p})^{1/2}$, we have $\Prob({\cal J}) \ge 1 - 2 p^{-1}$. Further, we have
\begin{eqnarray*}
&& |\mbf\beta'(t)^\top {\cal X}_t^\top M_t {\cal X}_t [\tilde{\mbf\beta}(t)-\mbf\beta(t)]| \le |\tilde{\mbf\beta}(t)-\mbf\beta(t)|_1 |{\cal X}_t^\top M_t {\cal X}_t \mbf\beta'(t)|_\infty \\
&\le& |\tilde{\mbf\beta}(t)-\mbf\beta(t)|_1 \max_{j \le p} \left( \sum_{i \in N_t} w(i,t) X_{ij}^2 \right)^{1/2} \left[ \mbf\beta'(t)^\top {\cal X}_t^\top M_t^2 {\cal X}_t \mbf\beta'(t) \right]^{1/2} \quad (\text{Cauchy-Schwarz}) \\
&\le& |\tilde{\mbf\beta}(t)-\mbf\beta(t)|_1 L_{t,1} \sqrt{\rho_{\max}({\cal X}_t^\top M_t^2 {\cal X}_t, s)} |\mbf\beta'(t)|_2 \quad (\text{assumption 2}) \\
&\le& |\tilde{\mbf\beta}(t)-\mbf\beta(t)|_1 L_{t,1} C_0 s^{1/2} b_n \sqrt{\rho_{\max}({\cal X}_t^\top {\cal X}_t,s )} \quad (\text{Lemma \ref{lem:diag_weighted_spectral_bound}, assumption 2 and 3}) \\
&\le& |\tilde{\mbf\beta}(t)-\mbf\beta(t)|_1 L_{t,1} C_0 (|N_t| \overline{w}_t s)^{1/2} b_n \varepsilon_0^{-1}  \quad (\text{Lemma \ref{lem:unweighted-2-weighted}, equation (\ref{eqn:unweighted-2-weighted_restricted_upper_bound})}).
\end{eqnarray*}
Similarly, we can show that $|\mbf\xi^\top {\cal X}_t^\top {\cal X}_t [\tilde{\mbf\beta}(t)-\mbf\beta(t)]| = O( L_{t,1} (|N_t| \overline{w}_t s)^{1/2} b_n^2 \varepsilon_0^{-1} |\tilde{\mbf\beta}(t)-\mbf\beta(t)|_1)$. Therefore, it follows that, with probability at least $(1-2p^{-1})$,
\begin{align*}
\left| \left\langle {\cal Y}_t - {\cal X}_t {\mbf\beta}(t), {\cal X}_t [\tilde{\mbf\beta}(t) - {\mbf\beta}(t)] \right \rangle \right| \le \left[ \lambda_0 + 2 L_{t,1} C_0 (|N_t| \overline{w}_t s)^{1/2} b_n \varepsilon_0^{-1} (1+o(1)) \right] |\tilde{\mbf\beta}(t)-\mbf\beta(t)|_1.
\end{align*}
Now, choose $\lambda_1 \ge 2(\lambda_0 + 2 L_{t,1} C_0 (|N_t| \overline{w}_t s)^{1/2} b_n \varepsilon_0^{-1})$, we get
\begin{equation*}
2|{\cal X}_t [\tilde{\mbf\beta}(t) - {\mbf\beta}(t)]|_2^2 + 2\lambda_1 |\tilde{\mbf\beta}(t)|_1 \le \lambda_1 |\tilde{\mbf\beta}(t)-\mbf\beta(t)|_1 + 2 \lambda_1 |\mbf\beta(t)|_1.
\end{equation*}
Denote $S_0 := S_0(t) = \text{supp}({\mbf\beta}(t))$. By the same argument as \cite[Lemma 6.3]{buhlmannvandegeer2011}, it is easy to see that, on ${\cal J}$,
\begin{equation*}
2|{\cal X}_t [\tilde{\mbf\beta}(t) - {\mbf\beta}(t)]|_2^2 + \lambda_1 |\tilde{\mbf\beta}_{S_0^c}(t)|_1 \le 3 \lambda_1 |\tilde{\mbf\beta}_{S_0}(t) - {\mbf\beta}_{S_0}(t)|_1.
\end{equation*}
But then, (\ref{eqn:tv-lasso-rate}) follows from the restricted eigenvalue condition (assumption 5) with the elementary inequality $4ab \le a^2 + 4b^2$ that
\begin{equation*}
2|{\cal X}_t [\tilde{\mbf\beta}(t) - {\mbf\beta}(t)]|_2^2 + \lambda_1  |\tilde{\mbf\beta}(t) - {\mbf\beta}(t)|_1 \le  4 \lambda_1 |\tilde{\mbf\beta}_{S_0}(t) - {\mbf\beta}_{S_0}(t)|_1 \le |{\cal X}_t [\tilde{\mbf\beta}(t) - {\mbf\beta}(t)]|_2^2 + 4 \lambda_1^2 {s / \phi_0^2}.
\end{equation*}
\end{proof}

\begin{defn}
\label{defn:subgaussian_rv}
A mean zero random variable is said to be {\it sub-Gaussian} with variance factor $\sigma^2$ if
\begin{equation*}
\log \E (e^{\lambda X}) \le \lambda^2 \sigma^2 / 2 \qquad \text{for all } \lambda \in \mathbb{R}.
\end{equation*}
\end{defn}

\begin{lem}
\label{lem:concentration_GP}
Let $\xi_{i}$ be iid sub-Gaussian random variables with mean zero and variance factor $\sigma^2$, and $e_i = \sum_{m=0}^\infty a_m \xi_{i-m}$ be a linear process. Let $\vw = (w_1, \cdots, w_n)$ be a real vector and $S_n = \sum_{i=1}^n w_i e_i$ be the weighted partial sum of $e_i$.
\begin{enumerate}
\item (Short-range dependence). If $|\va|_1 = \sum_{i=0}^\infty |a_i| < \infty$, then for all $x > 0$ we have
\begin{equation}
\label{eqn:concentration_GP_SRD}
\Prob(|S_n| \ge x) \le 2 \exp\left( - {x^2 \over 2 |\vw|_2^2 |\va|_1^2 \sigma^2} \right).
\end{equation}

\item (Long-range dependence). Suppose $K = \sup_{m \ge 0} |a_m| (m+1)^\varrho < \infty$, where $1/2 < \varrho < 1$. Then, there exists a constant $C_\varrho$ that only depends on $\varrho$ such that
\begin{equation}
\label{eqn:concentration_GP_LRD}
\Prob(|S_n| \ge x) \le 2 \exp\left( - {C_\varrho x^2 \over |\vw|_2^2 n^{2(1-\varrho)} \sigma^2 K^2} \right).
\end{equation}
\end{enumerate}
\end{lem}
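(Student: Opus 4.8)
The plan is to rewrite the weighted partial sum as a linear combination of the underlying innovations and then exploit the sub-Gaussianity of independent sums, reducing both claims to an $\ell^2$ bound on the resulting coefficient sequence. Concretely, substituting $e_i = \sum_{m \ge 0} a_m \xi_{i-m}$ and collecting the coefficient of each $\xi_j$, I would write
$$
S_n = \sum_{i=1}^n w_i e_i = \sum_{j \le n} b_j \xi_j, \qquad b_j = \sum_{i=\max(1,j)}^n w_i a_{i-j},
$$
with the convention $w_i = 0$ for $i \notin \{1,\dots,n\}$ and $a_m = 0$ for $m < 0$. Since the $\xi_j$ are independent sub-Gaussian with variance factor $\sigma^2$, the moment generating function factorizes and Definition \ref{defn:subgaussian_rv} gives $\E e^{\lambda S_n} \le \exp(\lambda^2 \sigma^2 |\vb|_2^2 / 2)$, so $S_n$ is sub-Gaussian with variance factor $\sigma^2 |\vb|_2^2$; a Chernoff bound together with symmetry then yields $\Prob(|S_n| \ge x) \le 2\exp(-x^2 / (2\sigma^2 |\vb|_2^2))$. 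The series $\sum_j b_j \xi_j$ converges in $L^2$ once $|\vb|_2^2 < \infty$, and the MGF bound passes to the limit, so the infinite index set causes no difficulty. Thus both parts reduce to bounding $|\vb|_2^2$.

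For the short-range case I would view $\vb$ as a superposition of shifts of $\vw$: writing $b_j = \sum_{k \ge 0} a_k w_{j+k}$ and applying Minkowski's inequality in $\ell^2$,
$$
|\vb|_2 = \Big\| \sum_{k \ge 0} a_k\, w_{\cdot + k} \Big\|_2 \le \sum_{k \ge 0} |a_k|\, \|w_{\cdot + k}\|_2 = |\va|_1\, |\vw|_2,
$$
since a shift preserves the $\ell^2$ norm. Substituting $|\vb|_2^2 \le |\va|_1^2 |\vw|_2^2$ into the tail inequality gives (\ref{eqn:concentration_GP_SRD}).

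For the long-range case $|\va|_1 = \infty$, so the Minkowski bound fails and I would instead expand the square and pass to the autocovariance of the weights. Setting $A(h) = \sum_{m \ge 0} a_m a_{m+h}$ (symmetric, with $|A(h)| \le |\va|_2^2$) and $\gamma_w(h) = \sum_i w_i w_{i+h}$, a direct computation gives
$$
|\vb|_2^2 = \sum_{i,i'} w_i w_{i'} A(i'-i) = \sum_{h} A(h)\, \gamma_w(h).
$$
Two features of $\gamma_w$ do the work: $|\gamma_w(h)| \le |\vw|_2^2$ by Cauchy--Schwarz, and $\gamma_w(h) = 0$ whenever $|h| \ge n$ because $\vw$ is supported on $n$ consecutive indices. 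Hence $|\vb|_2^2 \le |\vw|_2^2 \sum_{|h| < n} |A(h)|$, and the remaining task is to show $\sum_{|h| < n} |A(h)| \le C_\varrho K^2 n^{2(1-\varrho)}$. Using $|a_m| \le K(m+1)^{-\varrho}$ and splitting the sum defining $A(h)$ at $m = h$, one gets $|A(h)| \le C_\varrho K^2 (h+1)^{1 - 2\varrho}$; since $\varrho \in (1/2,1)$ forces $1 - 2\varrho \in (-1,0)$, summing over $|h| < n$ produces $\sum_{h=1}^{n-1}(h+1)^{1-2\varrho} \asymp n^{2(1-\varrho)}$. Plugging the resulting bound into the sub-Gaussian tail inequality gives (\ref{eqn:concentration_GP_LRD}) after absorbing constants into $C_\varrho$.

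The main obstacle is the long-range estimate of $\sum_{|h|<n}|A(h)|$: one must track the polynomial decay of the cross-products $a_m a_{m+h}$ carefully enough to obtain the exponent $1 - 2\varrho$ for $|A(h)|$ and then the precise growth $n^{2(1-\varrho)}$, which is exactly the mechanism producing the inflated effective variance (and, downstream, the larger optimal bandwidth) in the LRD regime. The short-range part and the reduction to $|\vb|_2^2$ are routine by comparison.
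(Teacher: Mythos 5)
Your proof is correct and shares the paper's skeleton: both rewrite $S_n = \sum_j b_j \xi_j$ with $b_j = \sum_i w_i a_{i-j}$, reduce everything to a bound on $|\vb|_2^2$, and finish with the Cram\'er--Chernoff bound for a sum of independent sub-Gaussian terms; your SRD bound $|\vb|_2^2 \le |\va|_1^2 |\vw|_2^2$ is the same as the paper's (you use Minkowski's inequality where the paper uses Cauchy--Schwarz, to identical effect). Where you genuinely diverge is the LRD estimate of $|\vb|_2^2$. The paper splits the index set into $j \in [1-n,n]$ and $j \le -n$, bounds the first block by $|\vw|_2^2 A_{2n}^2$ with $A_m = \sum_{l=0}^m |a_l| \lesssim K m^{1-\varrho}$, and controls the tail block via $|b_j| \le |\vw|_1 \bar a_{1-j}$ together with $|\vw|_1^2 \le n |\vw|_2^2$ and Karamata's theorem. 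You instead write $|\vb|_2^2 = \sum_h A(h)\gamma_w(h)$ with $A(h) = \sum_m a_m a_{m+h}$ and the weight autocovariance $\gamma_w$, exploit $|\gamma_w(h)| \le |\vw|_2^2$ and its support $|h| < n$, and bound $|A(h)| \lesssim K^2 (h+1)^{1-2\varrho}$ by splitting the defining sum at $m = h$; summing over $|h| < n$ recovers the same $K^2 n^{2(1-\varrho)} |\vw|_2^2$. Your route avoids the crude $|\vw|_1 \le \sqrt{n}\,|\vw|_2$ step and the appeal to Karamata's theorem, and it makes the mechanism of the LRD variance inflation (the slow decay of $A(h)$) more transparent; the paper's version requires slightly less bookkeeping since it never needs the absolute convergence and symmetry of $A(h)$. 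Both arguments are valid and give the stated constants up to the choice of $C_\varrho$.
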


\begin{proof}
Put $a_m = 0$ if $m < 0$ and we may write $S_n = \sum_{m \in \mathbb{Z}} b_m \xi_m$, where $b_m = \sum_{i=1}^n w_i a_{i-m}$. By the Cauchy-Schwarz inequality,
\begin{equation*}
\sum_{m \in \mathbb{Z}} b_m^2 \le \sum_{m \in \mathbb{Z}} \left(\sum_{i=1}^n w_i^2 |a_{i-m}| \right) \left(\sum_{i=1}^n |a_{i-m}|\right) \le |\vw|_2^2 |\va|_1^2.
\end{equation*}
Then, (\ref{eqn:concentration_GP_SRD}) follows from the Cram\'er-Chernoff bound \cite{blm2013}. Let $\bar{a}_m = \max_{l \ge m} |a_l|$ and $A_m = \sum_{l=0}^m |a_l|$. Note that $A_n \le K \sum_{l=0}^n (l+1)^{-\varrho} \le C_\varrho K (n+1)^{1-\varrho}$, where $C_\varrho = (1-\varrho)^{-1}$. Then, we have
\begin{equation*}
\sum_{m=1-n}^n b_m^2 \le \sum_{m=1-n}^n \left(\sum_{i=1}^n w_i^2 |a_{i-m}| \right) \left(\sum_{i=1}^n |a_{i-m}|\right) \le |\vw|_2^2 A_{2n}^2.
\end{equation*}
If $m \le -n$, then $|b_m|\le |\vw|_1 \bar{a}_{1-m}$ and therefore
\begin{equation*}
\sum_{m \le -n} b_m^2 \le |\vw|_1^2 \sum_{m \le -n} \bar{a}_{1-m}^2 \le C_\varrho n |\vw|_2^2 K^2 n^{1-2\varrho},
\end{equation*}
where the last inequality follows from Karamata's theorem; see e.g. \cite{resnick1987a}. Hence, the proof is complete by invoking the Cram\'er-Chernoff bound for sub-Gaussian random variables.
\end{proof}

\subsection*{Some implementation issues}
\label{sec:extensions}

We assumed that the noise variance-covariance matrix $\Sigma_e$ is known. In the iid error case $\Sigma_e=\sigma^2 I_n$, we have seen that the distribution $F(\cdot)$ is independent of $\sigma^2$ and therefore its value does not affect the inference procedure. The noise variance only impacts the tuning parameter of the initial Lasso estimator. In practice, we can use the scaled Lasso to estimate $\sigma^2$ in our numeric studies. Given that $|\hat\sigma / \sigma - 1| = o_\Prob(1)$ \cite{sunzhang2012}, the theoretical properties of our estimator (\ref{eqn:proposed-estimator}) remains the same if we plug in the scaled Lasso variance output to our method. 
For temporally dependent stationary error process, estimation of $\Sigma_e$ becomes more subtle since it involves $n$ autocovariance parameters. We propose a heuristic strategy: first, run the tv-Lasso estimator and get the residuals; then calculate the sample autocovariance matrix and apply a banding or tapering operation $B_h(\Sigma) = \{\sigma_{jk} \vone(|j-k|\le h) \}_{j,k=1}^p$ \cite{bickellevina2008b,caizhangzhou2010a,mcmurrypolitis2010}.

We provide some justification on the heuristic strategy for SRD time series models. To simplify explanation, we consider the uniform kernel and the bandwidth $b_n=1$. Suppose we have an oracle where $\mbf\beta(t)$ is known and we have access to the error process $e(t)$. Let $\Sigma^*_e$ be the oracle sample covariance matrix of $e_i$ with the Toeplitz structure i.e. the $h$-th subdiagonal of $\Sigma^*_e$ is $\sigma^*_{e,h} = n^{-1} \sum_{i=1}^{n-h} e_i e_{i+h}$. We first compare the oracle estimator and the true error covariance matrix $\Sigma_e$. Let $\alpha>0$ and define
\begin{equation*}
{\cal T}(\alpha, C_1, C_2) = \left\{ M \in ST^{p\times p} : \sum_{k=h+1}^p |m_k| \le C_1 h^{-\alpha}, \rho_j(M) \in [C_2, C_2^{-1}],\; \forall j=1,\cdots,p \right\},
\end{equation*}
where $ST^{p\times p}$ is the set of all $p \times p$ symmetric Toeplitz matrices. If $e_i$ has SRD, then $\Sigma_e \in {\cal T}(\varrho-1,C_1,C_2)$. By the argument in \cite{bickellevina2008b} and Lemma \ref{lem:concentration_GP}, we can show that
\begin{eqnarray*}
\rho_{\max}(B_h(\Sigma^*_e)-\Sigma_e) &\le& \rho_{\max}(B_h(\Sigma^*_e)-B_h(\Sigma_e)) + \rho_{\max}(B_h(\Sigma_e)-\Sigma_e) \\
&\lesssim_\Prob& h \sqrt{\log{h} \over n} + h^{-(\varrho-1)}.
\end{eqnarray*}
Choosing $h^*\asymp(n/\log{n})^{1/(2\varrho)}$, we get
$$
\rho_{\max}(B_h(\Sigma^*_e)-\Sigma_e) = O_\Prob\left( \left({\log{n}\over n} \right)^{\varrho-1\over 2\varrho} \right).
$$
This oracle rate is sharper than the one established in \cite{bickellevina2008b} for regularizing more general bandable matrices if $n=o(p)$. Here, the improved rate is due to the Toeplitz structure in $\Sigma_e$. Since $\Sigma_e$ has uniformly bounded eigenvalues from zero and infinity, the banded oracle estimator $B_h(\Sigma^*_e)$ can be used as a benchmark to assess the tv-Lasso residuals $\tilde{\cal E}_t = {\cal Y}_t - {\cal X}_t \tilde{\mbf\beta}(t)$.

\begin{prop}
\label{prop:error_cov_mat_bandwidth_selection}
Suppose $\Sigma_e \in {\cal T}(\varrho-1, C_1, C_2)$ and conditions of Lemma \ref{lem:tv-lasso-rate} are satisfied except that $(e_i)$ is an SRD stationary Gaussian process with $\varrho>1$. Then
\begin{equation}
\label{eqn:error_cov_mat_bandwidth_selection}
\rho_{\max}(B_h(\hat\Sigma_e) - B_h(\Sigma^*_e)) = O_\Prob(h \lambda_1 s^{1/2}).
\end{equation}
With the choice $h^*\asymp(n'/\log{n'})^{1/2\varrho}$ where $n'=|N_t|$, we have
\begin{equation}
\label{eqn:error_cov_mat_bandwidth_selection_2}
\rho_{\max}(B_h(\hat\Sigma_e) -\Sigma_e)) = O_\Prob \left( \left({\log{n'}\over n'} \right)^{\varrho-1\over 2\varrho} + \left({n' \over \log{n'}} \right)^{1\over 2\varrho} \left(\sqrt{s\log{p} \over n'} + s b_n)\right) \right).
\end{equation}
\end{prop}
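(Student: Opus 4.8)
The plan is to control two error sources separately: the statistical error from replacing the true errors $e_i$ by the tv-Lasso residuals, and the oracle bias--variance tradeoff for $B_h(\Sigma^*_e)-\Sigma_e$ already quantified in the display preceding the proposition. For the first source, I write $\eta_i = ({\cal E}_t)_i$ and $\delta_i = ({\cal X}_t(\tilde{\mbf\beta}(t)-\mbf\beta(t)))_i$, so that the $i$-th weighted residual equals $\eta_i-\delta_i$ by the identity $y_i-\vx_i^\top\tilde{\mbf\beta}(t) = e_i - \vx_i^\top(\tilde{\mbf\beta}(t)-\mbf\beta(t))$. Expanding the lag-$k$ sample autocovariances formed from the residuals and from the oracle errors, the difference of the $k$-th subdiagonals is
\begin{equation*}
(\hat\Sigma_e)_k - (\Sigma^*_e)_k = -\sum_i \eta_i\delta_{i+k} - \sum_i\delta_i\eta_{i+k} + \sum_i\delta_i\delta_{i+k}.
\end{equation*}
Since both $\hat\Sigma_e$ and $\Sigma^*_e$ are symmetric Toeplitz, the difference after banding is a symmetric banded Toeplitz matrix, whose operator norm is bounded by the sum of the absolute values of its $2h+1$ nonzero subdiagonals, i.e. $\rho_{\max}(B_h(\hat\Sigma_e)-B_h(\Sigma^*_e)) \le (2h+1)\max_{|k|\le h}|(\hat\Sigma_e)_k-(\Sigma^*_e)_k|$. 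Thus it suffices to bound each of the three bilinear forms uniformly over all lags within the band.

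Each bilinear form is handled by Cauchy--Schwarz, and the crucial point is to measure $\delta$ in the $\ell_2$ (prediction) norm rather than through $\ell_1$/sup bounds. By Lemma \ref{lem:tv-lasso-rate}, $(\sum_i\delta_i^2)^{1/2} = |{\cal X}_t(\tilde{\mbf\beta}(t)-\mbf\beta(t))|_2 \le 2\phi_0^{-1}\lambda_1 s^{1/2} = O_\Prob(\lambda_1 s^{1/2})$, while $(\sum_i\eta_i^2)^{1/2} = |{\cal E}_t|_2 = O_\Prob(1)$ because $\E|{\cal E}_t|_2^2 = \Var(e_1)\sum_{i\in N_t}w(i,t) = \Var(e_1) < \infty$ under SRD (here $\varrho>1$ gives $\sum_m|a_m|<\infty$). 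Cauchy--Schwarz then yields $|\sum_i\eta_i\delta_{i+k}| \le |{\cal E}_t|_2\,|{\cal X}_t(\tilde{\mbf\beta}(t)-\mbf\beta(t))|_2 = O_\Prob(\lambda_1 s^{1/2})$ and the same for the symmetric term, whereas $|\sum_i\delta_i\delta_{i+k}| \le |{\cal X}_t(\tilde{\mbf\beta}(t)-\mbf\beta(t))|_2^2 = O_\Prob(\lambda_1^2 s)$, which is of smaller order. Since neither factor depends on $k$, the bound is uniform over lags with no further union bound, and (\ref{eqn:error_cov_mat_bandwidth_selection}) follows once the factor $(2h+1)$ is absorbed into $O_\Prob(h\lambda_1 s^{1/2})$.

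For (\ref{eqn:error_cov_mat_bandwidth_selection_2}) I would use the triangle inequality $\rho_{\max}(B_h(\hat\Sigma_e)-\Sigma_e) \le \rho_{\max}(B_h(\hat\Sigma_e)-B_h(\Sigma^*_e)) + \rho_{\max}(B_h(\Sigma^*_e)-\Sigma_e)$, bounding the first summand by (\ref{eqn:error_cov_mat_bandwidth_selection}) and the second by the localized oracle rate $h\sqrt{\log n'/n'} + h^{-(\varrho-1)}$ from the Bickel--Levina argument together with Lemma \ref{lem:concentration_GP}(i). Substituting $h^*\asymp(n'/\log n')^{1/(2\varrho)}$ balances $h\sqrt{\log n'/n'}$ against $h^{-(\varrho-1)}$ to give $(\log n'/n')^{(\varrho-1)/(2\varrho)}$, while the first summand becomes $(n'/\log n')^{1/(2\varrho)}\lambda_1 s^{1/2}$; inserting the penalty level $\lambda_1\asymp\sigma\sqrt{\log p/n'} + s^{1/2}b_n$ of Theorem \ref{thm:representation} gives $\lambda_1 s^{1/2}\asymp\sqrt{s\log p/n'} + sb_n$ and hence the stated bound. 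The main obstacle is not any single estimate but the choice of norm for the tv-Lasso error: routing the cross terms through the $\ell_2$ prediction bound (rather than $|{\cal X}_t(\cdot)|_\infty$-type controls) is precisely what produces a clean, lag-uniform $O_\Prob(\lambda_1 s^{1/2})$ without a separate concentration argument for lagged design--error products; a secondary point to verify is that the SRD condition $\varrho>1$ is exactly what simultaneously guarantees $|{\cal E}_t|_2 = O_\Prob(1)$ and the applicability of the oracle rate.
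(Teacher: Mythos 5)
Your proof is correct and follows essentially the same route as the paper: expand the difference of lagged sample autocovariances, apply Cauchy--Schwarz with the $\ell_2$ prediction-error bound $|{\cal X}_t(\tilde{\mbf\beta}(t)-\mbf\beta(t))|_2 = O_\Prob(\lambda_1 s^{1/2})$ from Lemma \ref{lem:tv-lasso-rate} together with $n^{-1}\sum_i e_i^2 = O_\Prob(1)$, pick up the factor $h$ from the banded-Toeplitz operator-norm bound, and conclude the second display by the triangle inequality against the oracle rate with $h^*$ chosen to balance $h\sqrt{\log n'/n'}$ and $h^{-(\varrho-1)}$. The only cosmetic difference is that you expand the residual products into three bilinear forms in $(\eta,\delta)$ while the paper telescopes $\hat e_i\hat e_{i+k}-e_ie_{i+k}$ into two cross terms; both reduce to the same estimates.
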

It is interesting to note that the price we pay to choose $h$ for not knowing the error process is the second term in (\ref{eqn:error_cov_mat_bandwidth_selection_2}). Bandwidth selection for the smoothing parameter $b_n$ is a theoretically challenging task in the high dimension. Asymptotic optimal order for the parameter is available up to some unknown constants depending on the data generation parameters. We shall use the cross-validation (CV) in our simulation studies and real data analysis.

\begin{proof}[Proof of Proposition \ref{prop:error_cov_mat_bandwidth_selection}]
Since we consider the uniform kernel, we may assume $b_n=1, |N_t|=n$ and then rescale. Observe that
\begin{eqnarray*}
\max_{|k| \le h} |\hat\sigma_{e,k}^2-\sigma_{e,k}^{*2}| &=& \max_{|k| \le h} {1\over n} \left|\sum_{i=1}^{n-k} (\hat{e}_i\hat{e}_{i+k}-e_ie_{i+k})\right| \\
&\le& \max_{|k| \le h} {1\over n} \left|\sum_{i=1}^{n-k} \hat{e}_i(\hat{e}_{i+k}-e_{i+k})\right| + \left|\sum_{i=1}^{n-k} e_{i+k}(\hat{e}_i-e_i)\right| \\
&\le& \max_{|k| \le h} {1\over n} \left(\sum_{i=1}^{n-k} \hat{e}_i^2 \right)^{1/2}  \left(\sum_{i=1}^{n-k} (\hat{e}_{i+k}-e_{i+k})^2 \right)^{1/2} \\
&& \quad + \max_{|k| \le h} {1\over n} \left(\sum_{i=1}^{n-k} e_{i+k}^2 \right)^{1/2}  \left(\sum_{i=1}^{n-k} (\hat{e}_i-e_i)^2 \right)^{1/2} \\
&\le& \left[ \left({1\over n} \sum_{i=1}^n \hat{e}_i^2 \right)^{1/2} + \left( {1\over n} \sum_{i=1}^n e_i^2 \right)^{1/2} \right]  \left({1\over n} \sum_{i=1}^n (\hat{e}_i-e_i)^2 \right)^{1/2}.
\end{eqnarray*}
By Lemma \ref{lem:tv-lasso-rate},
$$
{1\over n} \sum_{i=1}^n (\hat{e}_i-e_i)^2 = |\tilde{\cal E}_t - {\cal E}_t |_2^2 = |\calX_t[\tilde{\mbf\beta}(t)-\mbf\beta(t)]|_2^2 = O_\Prob(\lambda_1^2 s).
$$
Then, it follows from the last expression and $n^{-1} \sum_{i=1}^n e_i^2 = O_\Prob(1)$ that
$$
\max_{|k| \le h} |\hat\sigma_{e,k}^2-\sigma_{e,k}^{*2}| = O_\Prob(\lambda_1 s^{1/2}).
$$
Therefore
$$
\rho_{\max}(B_h(\hat\Sigma_e) - B_h(\Sigma^*_e)) \lesssim h \max_{|k| \le h} |\hat\sigma_{e,k}^2-\sigma_{e,k}^{*2}| = O_\Prob(h \lambda_1 s^{1/2}).
$$
\end{proof}

\end{document}